\title{On the DoF of Two-way $2\times2\times2$ Relay Networks\\with or without Relay Caching}
\author{Mehdi Ashraphijuo, Vaneet Aggarwal, and Xiaodong Wang \thanks{M. Ashraphijuo and X. Wang are with the Electrical Engineering Department, Columbia University, New York, NY 10027 (e-mail: mehdi@ee.columbia.edu, wangx@ee.columbia.edu). V. Aggarwal is with the School of Industrial Engineering, Purdue University, West Lafayette, IN 47907 (e-mail: vaneet@purdue.edu).}}
\newtheorem{theorem}{Theorem}
\newtheorem{proposition}{Proposition}
\newtheorem{lemma}{Lemma}
\newtheorem{corollary}{Corollary}
\newtheorem{conjecture}{Conjecture}
\newtheorem{remark}{Remark}
\begin{document}
\maketitle

\begin{abstract}
Two-way relay is potentially an effective approach to spectrum sharing and aggregation by allowing simultaneous bidirectional transmissions between source-destinations pairs. This paper studies the two-way $2\times2\times2$ relay network, a class of four-unicast networks, where there are four source/destination nodes and two relay nodes, with each source sending a message to its destination. We show that without relay caching the total degrees of freedom (DoF) is bounded from above by $8/3$, indicating that bidirectional links do not double the DoF (It is known that the total DoF of one--way $2\times2\times2$ relay network is $2$.). Further, we show that the DoF of $8/3$ is achievable for the two-way $2\times2\times2$ relay network with relay caching. Finally, even though the DoF of this network is no more than $8/3$ for generic channel gains, DoF of $4$ can be achieved for a symmetric configuration of channel gains.
\end{abstract}

{\bf Index terms:} Degrees of freedom (DoF) , four-unicast channels, two-way relay network, caching.


\newpage

\section{Introduction}

To meet the demand of future mobile networks, it is important to understand the fundamental capacity of networks. While characterizing network capacity is in general unsolved, there has been considerable progress in two research fronts. The first one focuses on single-flow multi-hop networks, in which one source aims to send the same message to one or more destinations, using multiple relay nodes. Since in this scenario all destination nodes are interested in the same message, there is effectively only one information stream in the network. Starting from the max-flow-min-cut theorem of Ford-Fulkerson \cite{ford1956maximal}, there has been significant progress on this problem \cite{avestimehr2011wireless}.

The second research direction focuses on multi-flow wireless networks with only one-hop between the sources and the destinations, i.e., the interference channel. While the capacity of the interference channel remains unknown, there has been a variety of approximate capacity results, such as constant-gap capacity approximations \cite{etkin2008gaussian,bresler2010approximate,bresler2008two} and degrees of freedom (DoF) characterizations \cite{cadambe2008interference,etkin2006degrees,jafar2008degrees,motahari2014real,maddah2008communication,jafar2007degrees}.

The two-way communication between two users was first studied by Shannon \cite{Shannon}. Recently, there have been many attempts to demonstrate two-way communications experimentally \cite{Chen:1998aa,Khandani,Bliss:2007,Radunovic:2009aa,Aryafar,tech_report,Bharadia}. The two-way relay channel where two users communicate to each other in the presence of relays, has been widely studied \cite{b1,b2,i4,i5,nam2010capacity,i7,i8,Rankov,s2,s5,s8,s16,s20,s22,Avestim-t2,s4,s12,s13,JJ}. Two-unicast channels consist of two sources and two destinations communicating through a general network. The DoF for one-way $2\times2\times2$ fully-connected two-unicast channels has been studied in \cite{gou2012aligned}, and further extended with interfering relays in  \cite{gou2011aligned}. These results were further generalized to one-way $2\times2\times2$ non-layered topology in \cite{gou2014toward,gou2011degrees2}. General one-way two-unicast channel has been considered in \cite{wang2011multiple,shomorony2013two} and it was shown in \cite{shomorony2013two} that the DoF for any topology takes one of the values in $\{1,\frac{3}{2},2\}$, depending on the topology. Two-way two-unicast  channels have been studied for a single relay in \cite{wang2013degrees,wang2014beyond,xin2011coordinated}. In \cite{lee2013achievable}, three different achievability strategies for two-way MIMO $2\times2\times2$ fully-connected channel are proposed. A finite-field two-way two-unicast model is also studied in \cite{maier2013cyclic,hong2013two}.

Caching is a technique to reduce traffic load by exploiting the high degree of asynchronous content reuse and the fact that storage is cheap and ubiquitous in today's wireless devices \cite{2cc,3cc}. During off-peak periods when network resources are abundant, some content can be stored at the wireless edge (e.g., access points or end user devices), so that demands can be met with reduced access latencies and bandwidth requirements. The caching problem has a long history, dating back to the work by Belady in 1966 \cite{1ma}. There are various forms of caching, i.e., to store data at user ends, relays, etc. \cite{4cc}. However, using the uncoded data on devices can result in an inefficient use of the aggregate cache capacity \cite{5cc}. The caching problem consists of a placement phase which is performed offline and an online delivery phase. One important aspect of this problem is the design of the placement phase in order to facilitate the delivery phase. There are several recent works that consider communication scenarios where user nodes have pre-cached information from a fixed library of possible files during the offline phase, in order to minimize the transmission from source during the delivery phase \cite{maddah2014fundamental,ji2015throughput}. There are only a limited number of works on the DoF with caching. In particular, \cite{han2015degrees,han2015improving} study the DoF for the relay and interference channels with caching, respectively, under some assumptions and provide asymptotic results on the DoF as a function of the output of some optimization problems.

In this paper, we study the two-way $2\times2\times2$ relay network, a class of four-unicast networks, also known as the two-way layered interference channel in the literature. We consider a general two-way $2\times2\times2$ relay network where all channel gains are chosen from the same continuous distribution. Even though the one-way $2\times2\times2$ relay network has $2$ DoF, we show that the two-way $2\times2\times2$ relay network has DoF less than or equal to $8/3$. Thus, the bidirectional links cannot double the DoF. In the analysis of cached $2\times2\times2$ relay network, we show the equivalence of our model to the compound MISO broadcast channel and use the existing results on the latter to obtain the achievable DoF of the former. Note that this is the first work relating $2\times2\times2$ relay network and compound MISO broadcast channel.

We further propose a caching strategy in relays for the two-way $2\times2\times2$ relay network based on prefetching uncoded raw bits and delivering linearly encoded messages to facilitate the transmission from relays to destinations. We show that with relay caching, the DoF of $8/3$ is achievable.

Finally for a special case of two-way $2\times2\times2$ relay network where the channels exhibit symmetries, we show that the DoF is $4$. This special case is interesting because (i) This shows that the $2\times2\times2$ topology allows $4$ DoF for some symmetric channel gains while the DoF is outer bounded by $8/3$ for generic channel gains. (ii) The non-invertibility of the symmetric channel matrix plays an important role in achieving DoF=4, which does not hold for the general channel matrix.  (iii) The symmetric channel  is a common model for many results on interference channels \cite{NRx1,NRx2,NRx3,NRx4}; and these results can be obtained only with such symmetric assumptions and the problems remain open otherwise.

The remainder of this paper is organized as follows. In Section \ref{sec2aa}, we give the model for the two-way $2\times2\times2$ relay network. In Section \ref{thm_2_UB_Sec}, the main results on the DoF of the $2\times2\times2$ relay network without relay caching is studied. The DoF of a symmetric two-way $2\times2\times2$ relay network is also investigated in this section. In Section \ref{ofo2}, the results on the DoF of the $2\times2\times2$ relay network with relay caching is presented. Finally, Section \ref{sec5} concludes this paper.


\section{Channel Model and Related Works}\label{sec2aa}

In this section, we first present our system model and then we discuss some recent related results in the literature.

\subsection{Channel Model}\label{vmnbx}
As shown in Fig. \ref{fig:1}, the two-way $2\times2\times2$ relay network consists of four transmitters $S_1,\dots,S_4$, two relays $R_1,R_2$, and four receivers $D_1,\dots,D_4$. Each transmitter $S_i$ has one message that is intended for its receiver $D_i$. Fig. \ref{fig:2} shows the two hops of this system separately. In the first hop (Fig. \ref{fig:2.1}), the signal received at relay $R_k$, $k\in\{1,2\}$ in time slot $m$ is
\begin{equation}\label{wyek1}
Y_{R_k}[m]=\sum_{i=1}^{4}H_{i,R_k}[m]X_i[m]+Z_{R_k}[m],
\end{equation}
where $H_{i,R_k}[m]$ is the channel coefficient from transmitter $S_i$ to relay $R_k$, $X_i[m]$ is the signal transmitted from $S_i$, $Y_{R_k}[m]$ is the signal received at relay $R_k$ and $Z_{R_k}[m]$ is the i.i.d. circularly symmetric complex Gaussian noise with zero mean and unit variance, $i\in\{1,2,3,4\}$, $k\in\{1,2\}$. In the second hop (Fig. \ref{fig:2.2}), the signal received at receiver $D_i$ in time slot $m$ is given by
\begin{equation}
Y_i[m]=\sum_{k=1}^{2}H_{R_k,i}[m]X_{R_k}[m]+Z_i[m],
\end{equation}
where $H_{R_k,i}[m]$ is the channel coefficient from relay $R_k$ to receiver $D_i$, $X_{R_k}[m]$ is the signal transmitted from $R_k$, $Y_i[m]$ is the signal received at receiver $D_i$ and $Z_i[m]$ is the i.i.d. circularly symmetric complex Gaussian noise with zero mean and unit variance, $i\in\{1,2,3,4\}$, $k\in\{1,2\}$. We assume that the channel coefficient values are drawn i.i.d. from a continuous distribution and they are bounded from below and above, i.e., $H_{\min} < | H_{i,R_k}[m] | < H_{\max}$ and $H_{\min} < | H_{R_k,i}[m] | < H_{\max}$ as in \cite{cadambe2008interference}. The relays are assumed to be full-duplex and equipped with caches. Furthermore, the relays are assumed to be causal, which means that the signals transmitted from the relays depend only on the signals received in the past and not on the current received signals and can be described as
\begin{equation}
X_{R_k}[m] = f(Y_{R_k}^{m-1},X_{R_k}^{m-1},C_{R_k}),
\end{equation}
where $X_{R_k}^{m-1}\triangleq(X_{R_k}[1],\dots,X_{R_k}[m-1])$, $Y_{R_k}^{m-1}\triangleq(Y_{R_k}[1],\dots,Y_{R_k}[m-1])$, and $C_{R_k}$ is the cached information in relay ${R_k}$. We assume that source $S_i$, $i\in\{1,2,3,4\}$ knows only channels $H_{i,R_k}$, $k\in\{1,2\}$; relay $R_k$, $k\in\{1,2\}$ knows channels $H_{i,R_k}$,  $H_{R_1,i}$ and $H_{R_2,i}$, $i\in\{1,2,3,4\}$; and destination $D_i$, $i\in\{1,2,3,4\}$ knows only channels $H_{R_k,i}$, $k\in\{1,2\}$.

The source $S_i$, $i\in\{1,2,3,4\}$ has a message $W_i$ that is intended for destination $D_i$. $|W_i|$ denotes the size of the message $W_i$. The rates ${\mathcal R}_i=\frac{\log |W_i|}{n}$, $i\in\{1,2,3,4\}$ are achievable during $n$ channel uses by choosing $n$ large enough, if  the probability of error can be arbitrarily small for all four messages simultaneously. The capacity region ${\mathcal C}=\{({\mathcal R}_1,{\mathcal R}_2,{\mathcal R}_3,{\mathcal R}_4)|({\mathcal R}_1,{\mathcal R}_2,{\mathcal R}_3,{\mathcal R}_4)\in{\mathcal C}\}$ represents the set of all achievable quadruples. The sum-capacity is the maximum sum-rate that is achievable, i.e., ${\mathcal C}_{\Sigma}(P)=\sum_{i=1}^{4}{\mathcal R}^c_i$ where $({\mathcal R}_1^c,{\mathcal R}_2^c,{\mathcal R}_3^c,{\mathcal R}_4^c) = \arg\max _{({\mathcal R}_1,{\mathcal R}_2,{\mathcal R}_3,{\mathcal R}_4)\in{\mathcal C}}\sum_{i=1}^{4} {\mathcal R}_i $ and $P$ is the transmit power at each source node. The DoF is defined as
\begin{equation}
DoF \triangleq \lim_{P\rightarrow\infty}\frac{{\mathcal C}_{\Sigma}(P)}{\log P}=
\sum_{i=1}^{4}\lim_{P\rightarrow\infty}\frac{{\mathcal R}^c_i}{\log P}=\sum_{i=1}^{4}d_i,
\end{equation}
where $d_i \triangleq \lim_{P\rightarrow\infty}\frac{{\mathcal R}^c_i}{\log P}$ is the ${\text{DoF}}$ of source $S_i$, for $i\in\{1,2,3,4\}$. We  assume that channel gains are i.i.d., chosen from the same continuous distribution, and thus the ${\text{DoF}}$ is the result for almost every channel realization (in other words, with probability 1 over the channel realizations). We denote ${\text{DoF}}_C$ as the DoF for the case of with relay caching, ${\text{DoF}}_{NC}$ as the DoF for the case of no relay caching.

\begin{figure}[htbp]
\centering
	\includegraphics[width=9cm]{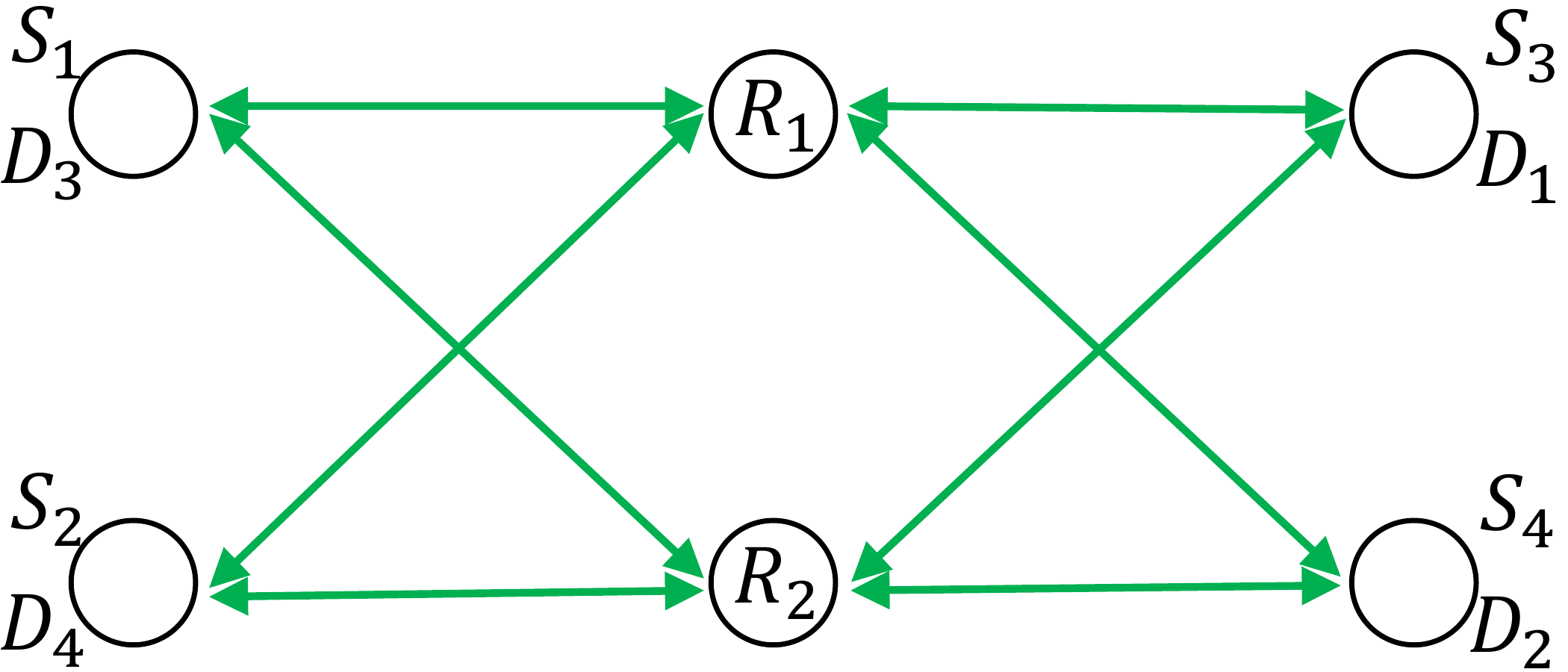}
\caption{Two-way $2\times2\times2$ relay network.}
\label{fig:1}
\end{figure}

\begin{figure}[htbp]
\centering
\subfigure[The channels from transmitters to the relays.]{
	\includegraphics[width=10cm]{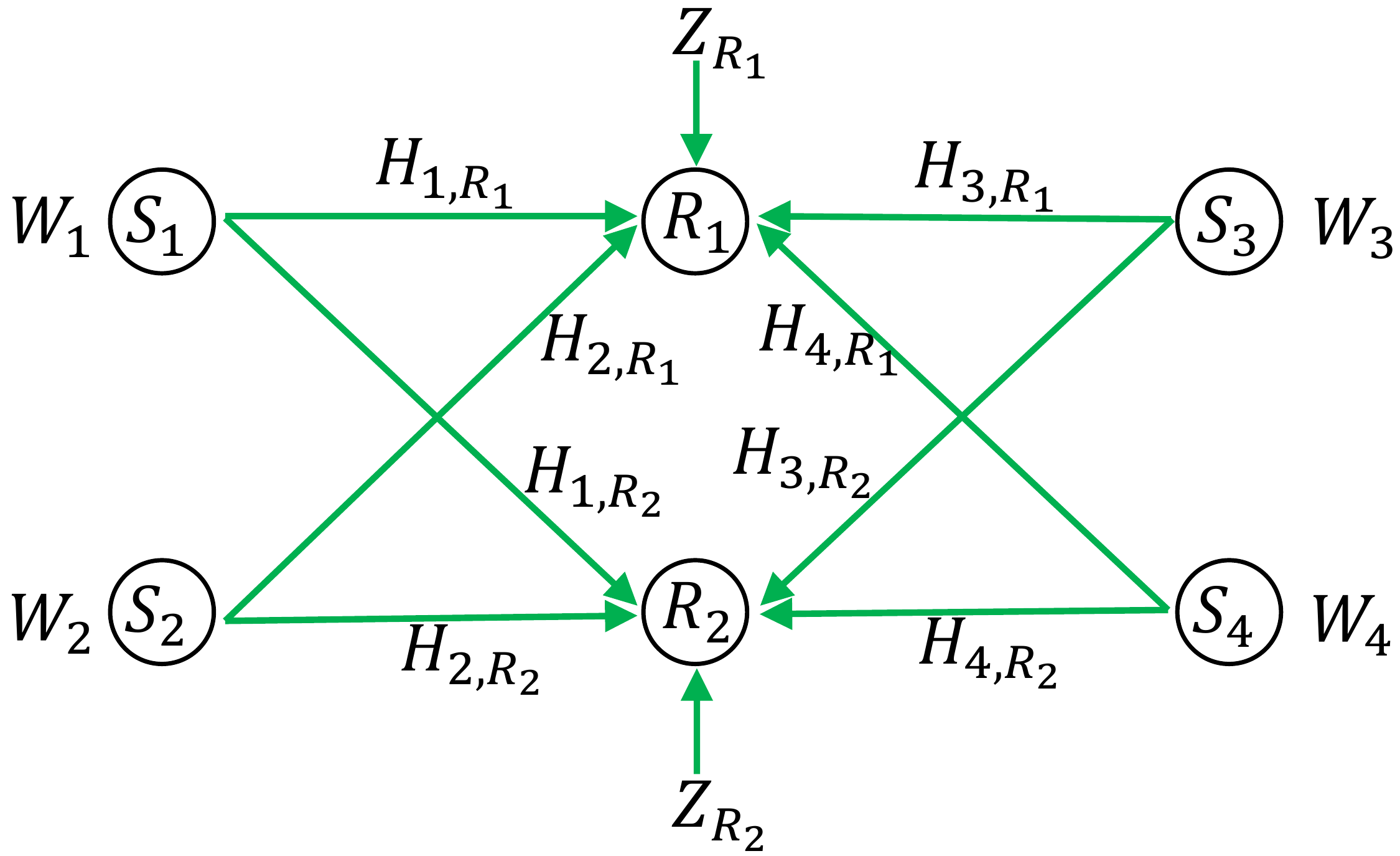}
    \label{fig:2.1}
}
\subfigure[The channels from relays to the receivers.]{
	\includegraphics[width=10cm]{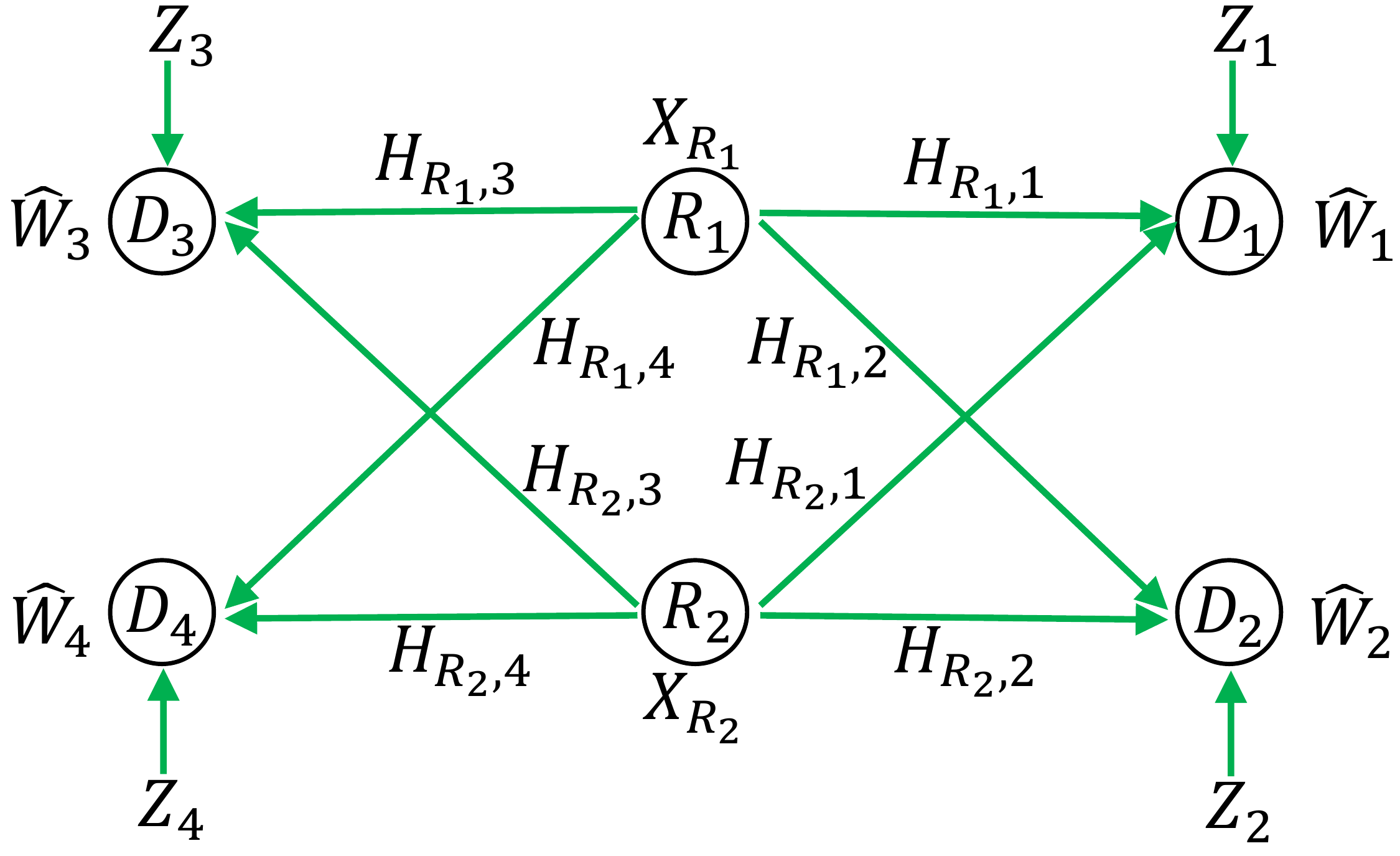}
    \label{fig:2.2}
}
\caption[Optional caption for list of figures]{The channels from and to relays in a two-way $2\times2\times2$ relay network.}
\label{fig:2}
\end{figure}

\subsection{Related Works}

In the literature, there has been extensive research over the last decade to characterize the DoF and the capacity region of one-way relay networks as well as two-unicast networks. However, beyond single-hop, there is much less known about the capacity of multi-flow networks. Even in the simplest case with two sources $S_1$ and $S_2$ and two destinations $D_1$ and $D_2$, there are very few results, such as \cite{hu1963multi}, where the maximum flow in two-unicast undirected wireline networks is characterized. In the wireless realm, constant-gap capacity approximations for specific two-hop networks (the ZZ and ZS structures as depicted in Fig. \ref{fig:rr}) were obtained in \cite{mohajer2009approximate}. Furthermore, it was shown that the network resulting from the concatenation of two fully-connected interference channels (the XX network as depicted in Fig. \ref{fig:222}) admits the maximum of two DoF \cite{shomorony2013two,gou2012aligned}. The achievability scheme relies on the notion of real interference alignment, which was introduced in \cite{motahari2014real}.

In \cite{shomorony2013two}, two-unicast multi-hop wireless networks with two sources $S_1$ and $S_2$ and two destinations $D_1$ and $D_2$ that have a layered structure with arbitrary connectivity are studied. It is shown that, if the channel gains are chosen independently according to continuous distributions, then, with probability $1$, the DoF of the two-unicast layered Gaussian networks can be $1$, $3/2$ or $2$. In particular, for the one-way $2\times2\times2$ relay network in Fig. \ref{fig:222}, one DoF for each user is achievable, i.e., the total DoF is two.

\begin{figure}[htbp]
\centering
\subfigure[One-way ZZ channel.]{
	\includegraphics[width=10cm]{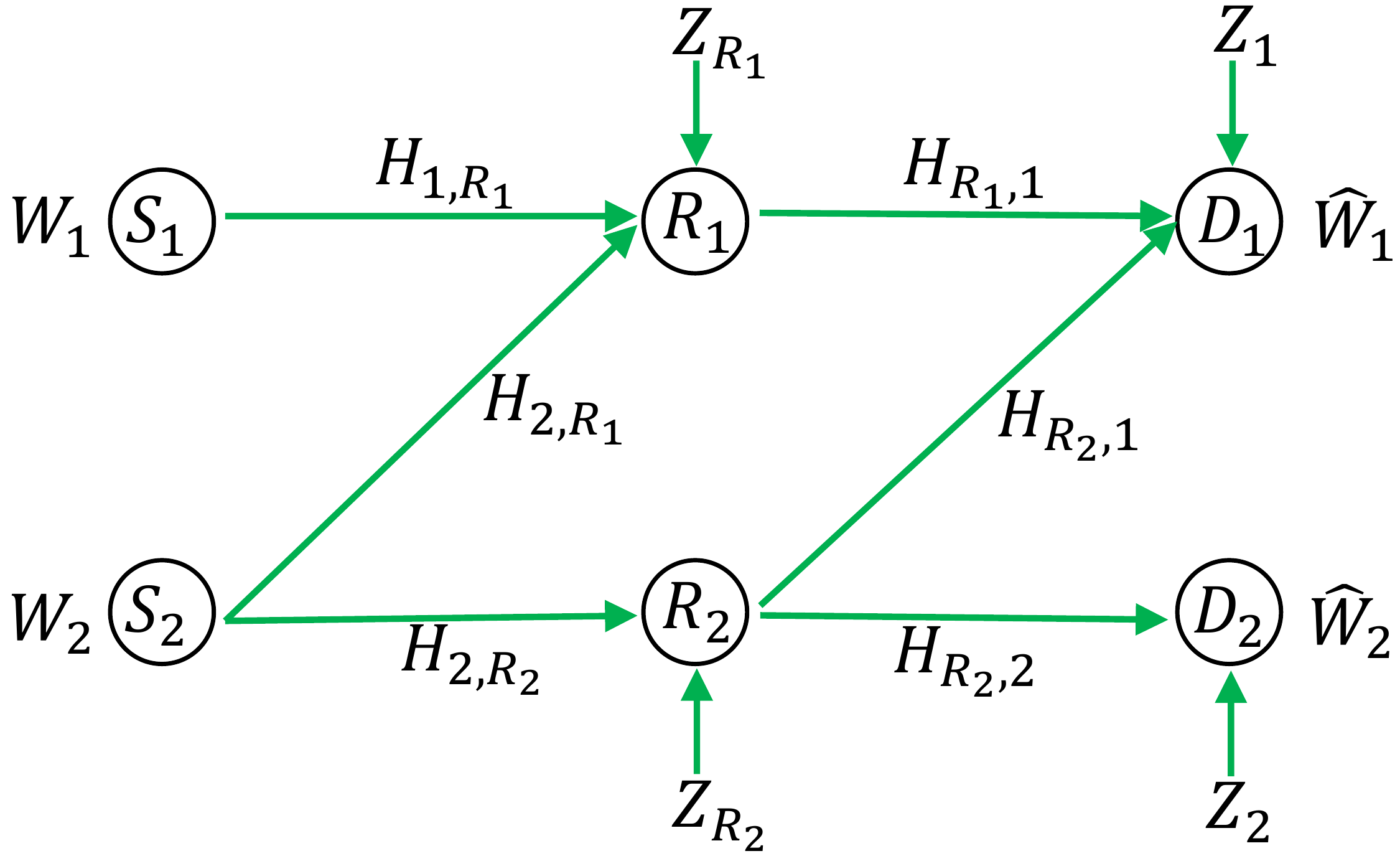}
    \label{fig:2.1rw}
}
\subfigure[One-way ZS channel.]{
	\includegraphics[width=10cm]{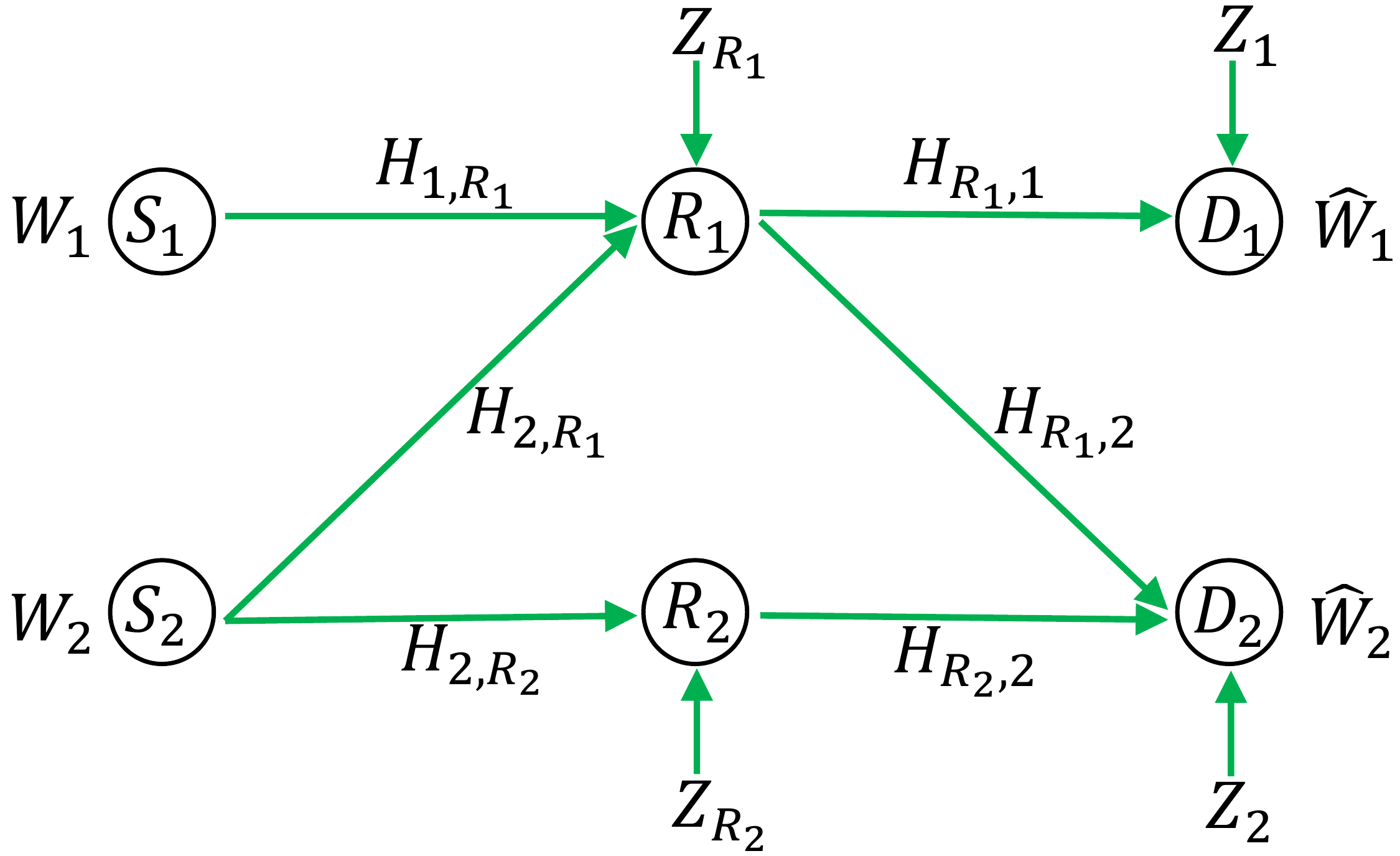}
    \label{fig:2.2rw}
}
\caption[Optional caption for list of figures]{One-way ZZ and ZS networks.}
\label{fig:rr}
\end{figure}

\begin{figure}[htbp]
\centering
	\includegraphics[width=9cm]{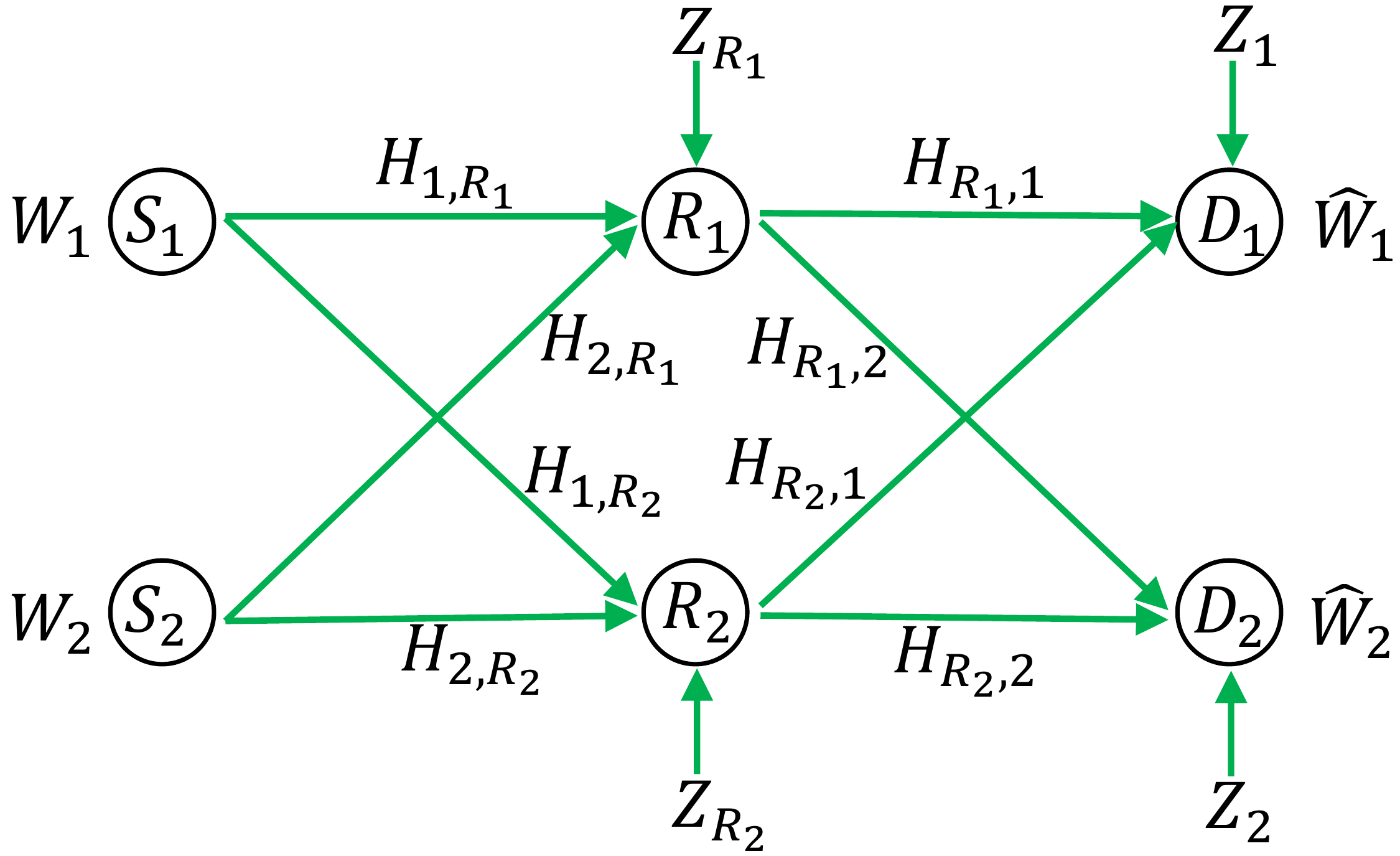}
\caption{One-way $2\times2\times2$ relay network.}
\label{fig:222}
\end{figure}

There are limited number of works on the two-way $2\times2\times2$ relay network in Fig. \ref{fig:1}. In \cite{lee2013achievable}, three different achievability strategies for such a network with MIMO channels are proposed. However, these schemes are considerably away from the optimum, since the achievable total DoF is only two for the SISO case, i.e., the same as the one-way network. In addition, a symmetric finite-field two-way $2\times2\times2$ relay network model is studied in \cite{maier2013cyclic,hong2013two}.

There are a few recent papers that studied the impact of caching on DoF. In particular, \cite{han2015degrees,han2015improving} analyzed the DoF gain induced by caching in interference networks, and proposed a cache-induced cooperative transmission strategy. Also \cite{navid} studied some fundamental limits of the DoF for cache-aided interference networks.

\section{Main Results on the Two-way $2\times2\times2$ Relay Network}\label{thm_2_UB_Sec}

In this section, we study the model in Figs. \ref{fig:1} and \ref{fig:2}, with and without caching at relays.


\subsection{Two-way $2\times2\times2$ Relay Network without Caching}\label{thm_2_UB_Secqq}

We assume that the channel parameters ${H}_{j,R_k}$, ${H}_{R_k,j}$, $j\in\{1,2,3,4\}$,  $k\in\{1,2\}$ are independent and chosen from  the same continuous distribution. Our result is that without caching at the relay, the total DoF of the two-way $2\times 2 \times 2$ network is lower bounded by $2$ and upper bounded by $8/3$.

\begin{proposition}\label{thm_2_LB2}
For a two-way $2\times2\times2$ relay network, ${\text{DoF}}_{NC} \ge 2$.
\end{proposition}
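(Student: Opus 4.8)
The plan is to reduce the four-unicast two-way network to a one-way $2\times2\times2$ relay network and invoke the achievability result already cited for Fig.~\ref{fig:222}. Since the total DoF is the sum $d_1+d_2+d_3+d_4$, it suffices to exhibit a single scheme that delivers two of these DoF while leaving the remaining users idle; the idle users contribute zero but do not lower the bound.

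First I would activate only sources $S_1$ and $S_2$ and require decoding only at $D_1$ and $D_2$, keeping $S_3$ and $S_4$ silent (i.e.\ $X_3[m]=X_4[m]=0$) and ignoring $D_3,D_4$. Under this restriction the first hop in \eqref{wyek1} becomes $Y_{R_k}[m]=H_{1,R_k}[m]X_1[m]+H_{2,R_k}[m]X_2[m]+Z_{R_k}[m]$, and the second hop reduces to $Y_i[m]=\sum_{k=1}^{2}H_{R_k,i}[m]X_{R_k}[m]+Z_i[m]$ for $i\in\{1,2\}$. This is exactly a fully-connected one-way $2\times2\times2$ relay network with sources $\{S_1,S_2\}$, relays $\{R_1,R_2\}$ and destinations $\{D_1,D_2\}$, i.e.\ the XX network of Fig.~\ref{fig:222}.

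Next I would verify that the hypotheses of the cited result hold for this sub-network. Because all links between consecutive layers are present in the original model, the restricted network remains fully connected; and since the channel gains are drawn i.i.d.\ from a continuous distribution, the generic-channel condition under which \cite{shomorony2013two,gou2012aligned} establish total DoF $2$ (one DoF per user via real interference alignment) holds with probability $1$. The relays' caching capability, full-duplex operation and causality only enlarge the set of admissible schemes, so they cannot hurt achievability; in particular the scheme may simply ignore the caches. Invoking this result gives $d_1=d_2=1$, hence ${\text{DoF}}_{NC}\ge d_1+d_2=2$.

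I do not anticipate a substantive obstacle here: the only point requiring care is confirming that restricting to two pairs preserves both full connectivity and the genericity of the channel coefficients, which is immediate from the model. If a scheme keeping all four users active were preferred, one could instead time-share, devoting half of the channel uses to the pair $\{S_1,S_2\}$ and the other half to $\{S_3,S_4\}$, each running the one-way $2\times2\times2$ scheme; this yields $d_i=1/2$ for every $i\in\{1,2,3,4\}$ and again sums to $2$.
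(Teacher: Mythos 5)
Your argument is correct, but it takes a genuinely different route from the paper. You silence $S_3,S_4$ and reduce to the \emph{one-way} $2\times2\times2$ (XX) network of Fig.~\ref{fig:222} formed by $\{S_1,S_2\}\to\{R_1,R_2\}\to\{D_1,D_2\}$, then import the DoF-$2$ result of \cite{gou2012aligned,shomorony2013two}, which rests on aligned interference neutralization over generic channels. The paper instead silences everything except $S_1$, $R_2$, and $S_3$, reducing to a \emph{two-way} $1\times1\times1$ relay channel: since the pairs are bidirectional ($D_1=S_3$ and $D_3=S_1$), the single relay simply forwards the sum of the two received signals and each end subtracts its own known transmission to recover the other's message. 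The paper's reduction is more elementary (no interference alignment needed, and it exploits the two-way structure that is the point of the model), while yours leans on the two-relay interference-management machinery; a minor caveat for your version is that the aligned-neutralization scheme requires each source to know the full first-hop channel matrix, which is slightly more CSI than the paper's stated assumption that $S_i$ knows only its own outgoing gains $H_{i,R_k}$ (the paper is admittedly loose about this in its own Appendix as well). Both reductions are valid and both deliver ${\text{DoF}}_{NC}\ge 2$; your time-sharing remark is also fine but unnecessary.
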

\begin{proof}
If all nodes except for $S_1$, $R_2$, and $S_3$ in Fig. \ref{fig:1} are silent, then the channel can be seen as a two-way $1\times1\times1$ relay network formed by $S_1$, $R_2$, and $S_3$. This channel can achieve two DoF by simply forwarding the sum of the received signals at relay $R_2$, which is the sum of the two messages from $S_1$ and $S_3$. 
\end{proof}
The next result shows that the ${\text{DoF}}$ for the two-way $2\times2\times2$ relay network is upper bounded by $8/3$. Thus, the DoF for the two-way network is smaller than twice the DoF for the one-way network. 

\begin{theorem}\label{thm_2_LB}
For a two-way $2\times2\times2$ relay network, $DoF_{NC} \le \frac{8}{3}$.
\end{theorem}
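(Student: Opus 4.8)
The plan is a genie-aided converse built on the single observation that, because each relay carries one antenna and carries \emph{no cache}, the relays can deliver to the destinations only a two-dimensional function of the source signals. Concretely, the relay-input vector $\bigl(Y_{R_1}[m],Y_{R_2}[m]\bigr)$ and the relay-output vector $\bigl(X_{R_1}[m],X_{R_2}[m]\bigr)$ each span only two complex dimensions per channel use, and the causal no-cache law $X_{R_k}[m]=f\bigl(Y_{R_k}^{m-1},X_{R_k}^{m-1}\bigr)$ makes the outputs a deterministic function of the past inputs. Hence for every user the Markov chain $W_i \to \mathbf{Y}_R^{\,n} \to Y_{D_i}^{\,n}$ holds, where $\mathbf{Y}_R^{\,n}$ stacks the two relay-input sequences.

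I would begin with Fano's inequality in the two-way form that credits each destination with the message of its co-located source: writing $\sigma(i)$ for the index of the source sharing a node with $D_i$, one gets $n\mathcal{R}_i \le I\!\left(W_i; Y_{D_i}^{\,n} \mid W_{\sigma(i)}\right)+n\epsilon_n$. Applying data processing along the chain above replaces each destination observation by the common two-dimensional relay input, giving $n\mathcal{R}_i \le I\!\left(W_i; \mathbf{Y}_R^{\,n}\mid W_{\sigma(i)}\right)+n\epsilon_n$. As building blocks I would also record that deactivating the two reverse messages turns the network into a one-way $2\times2\times2$ relay network, so by the result of \cite{shomorony2013two} the two forward flows obey $d_1+d_3\le 2$ and, symmetrically, the two backward flows obey $d_2+d_4\le 2$.

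The heart of the argument is to improve the trivial consequence $d_1+d_2+d_3+d_4\le 4$ of these two inequalities down to $8/3$ by charging the four flows against the \emph{single} two-dimensional relay resource they must share. Summing the four single-user bounds and regrouping, the joint terms are capped by $2\log P$ for each of the two relay dimensions, while there remain nonnegative two-way correlation terms of the form $I\!\left(W_i; W_{\sigma(i)} \mid \mathbf{Y}_R^{\,n}\right)$. These terms are precisely the mechanism by which the two-way network beats the one-way value $2$, and bounding them --- using independence of the messages, the rates themselves, and the genericity of the gains so that the two relay projections reach the four destinations along linearly independent directions --- is what pins the optimum at the balance point $8/3$.

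The main obstacle is exactly this last calibration. A loose bound on the correlation terms recovers only $d_1+d_2+d_3+d_4\le 4$, while a genie that reveals too much collapses the network to the one-way value $2$; the correct converse must compose the two rank-two constraints (the hop-one compression of the four sources into two relay inputs and the hop-two broadcast of two relay outputs to four destinations) so that neither direction can use more than its fair share of the shared relay dimensions. Once the combinatorial trade-off is set up correctly, the remaining Fano and high-SNR continuity steps are routine under the bounded-gain assumption $H_{\min}<|H|<H_{\max}$.
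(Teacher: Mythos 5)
There is a genuine gap: your proposal assembles the right raw ingredients (the two-dimensional relay bottleneck, data processing $W_i \to \mathbf{Y}_R^n \to Y_{D_i}^n$, Fano with the co-located message as side information) but stops exactly where the proof actually happens. You concede that a loose treatment of your ``correlation terms'' $I\bigl(W_i; W_{\sigma(i)} \mid \mathbf{Y}_R^n\bigr)$ only recovers $\sum_i d_i \le 4$, and you offer no concrete mechanism that pins the answer at $8/3$; ``the correct converse must compose the two rank-two constraints'' is a description of the goal, not an argument. The constant $8/3$ does not come from calibrating any correlation terms. It comes from a genie argument that packs \emph{three} messages into the two-dimensional relay observation: merge the two relays via an infinite-capacity link into a two-antenna super-relay and give it $W_1$ as genie side information. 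Then (i) $W_3$ is decodable at the super-relay, because $D_3=S_1$ decodes $W_3$ from a degraded version of $\mathbf{y}_R^n$ given $W_1$; and (ii) once $W_1,W_3$ are known, the relay subtracts $H_{1,R}X_1^n+H_{3,R}X_3^n$ and inverts the generically full-rank $2\times 2$ matrix $\mathbf{H}=[\mathbf{h}_{2,R}^T\ \mathbf{h}_{4,R}^T]^T$ to obtain separate noisy copies of $X_2^n$ and $X_4^n$, so $W_2$ and $W_4$ are decodable as well. Three messages resolved from a two-antenna observation forces $d_2+d_3+d_4\le 2$; by symmetry all four such triples satisfy the same bound, and summing gives $3\sum_i d_i\le 8$, i.e.\ $\mathrm{DoF}\le 8/3$. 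Without a bound on some \emph{triple} of the $d_i$'s (or an equivalent device), no combination of your pairwise one-way bounds $d$-sums $\le 2$ and the trivial per-dimension caps can beat $4$.

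A secondary issue: you locate the use of channel genericity in the second hop (the relay-to-destination directions), whereas the place it is actually needed is the first hop --- the invertibility of $\mathbf{H}$ in step (ii) above. This is not cosmetic: for the symmetric channel $\mathbf{H}$ is singular, the decoding of $W_2,W_4$ at the genie-aided relay fails, and indeed the paper shows the symmetric network achieves $\mathrm{DoF}=4>8/3$. Any correct proof must break down for symmetric gains at precisely this point, which is a useful sanity check that your proposed route, as described, does not yet pass.
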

\begin{proof}
For the outer bound, we assume that there is a channel with infinite capacity between the relays. Also, suppose that a genie provides $W_1$ to this combined relay. Then $W_3$ should be decodable at the relay as it is decodable at $D_3$ given $W_1$. Following this, the messages $W_2$ and $W_4$ can be decoded if the matrix ${\bf H}$ (defined below) is full rank as \eqref{gghhjh} will suggest this mathematically, which happens with probability $1$ over generic channel gains. Therefore, the combined relay should be able to decode three signals $W_2$, $W_3$, and $W_4$ with its two antennas (suggesting $d_2+d_3+d_4\le2$). This is further proved in the following.

Consider $n$ time slots of the channel use and assume that $nR_i$ represents the maximum rate achievable for transmitter $i$ in the total $n$ time slots. Define $Y_{i}^{n}\triangleq(Y_{i}[1],\dots,Y_{i}[n])$ and $X_{i}^{n}\triangleq(X_{i}[1],\dots,X_{i}[n])$, $i=1,\dots,4$. We also define ${\bf h}_{j,R} \triangleq [H_{j,R_1} \ H_{j,R_2}]$,   ${\bf y}_R^n \triangleq [Y_{R_1}^n \ Y_{R_2}^n]$, and ${\bf z}_R^n \triangleq [Z_{R_1}^n \ Z_{R_2}^n]$, where $Y_{R_k}^{n}\triangleq(Y_{R_k}[1],\dots,Y_{R_k}[n])$ and $Z_{R_k}^{n}\triangleq(Z_{R_k}[1],\dots,Z_{R_k}[n])$. Then, we have:
\begin{eqnarray}\label{gghhjh}
nR_3 &\stackrel{(a)}{\le}& I(W_3;Y_3^n)+n\epsilon_n\nonumber\\
&\stackrel{(b)}{\le}& I(W_3;Y_3^n|W_1)+n\epsilon_n\nonumber\\
&\stackrel{(c)}{\le}& I(W_3;{\bf y}_R^n|W_1)+n\epsilon_n\nonumber\\
&=& h({\bf y}_R^n|W_1)-h({\bf y}_R^n|W_1,W_3)+n\epsilon_n\nonumber\\
&=& h({\bf y}_R^n|W_1)-I({\bf y}_R^n;W_2,W_4|W_1,W_3)-h({\bf y}_R^n|W_1,W_3,W_2,W_4)+n\epsilon_n\nonumber\\
&\stackrel{(d)}{=}& h({\bf y}_R^n|W_1)-I({\bf y}_R^n;W_2,W_4|W_1,W_3)-h({\bf z}_R^n)+n\epsilon_n\nonumber\\
&=& h({\bf y}_R^n|W_1)-I({\bf y}_R^n;W_2,W_4|W_1,W_3)-\log\left((2\pi e)^{2n}\right)+n\epsilon_n\nonumber\\
&=& h({\bf y}_R^n|W_1)-H(W_2,W_4|W_1,W_3)+H(W_2,W_4|W_1,W_3,{\bf y}_R^n)-2n\log\left(2\pi e\right)+n\epsilon_n\nonumber\\
&\le& h({\bf y}_R^n|W_1)-H(W_2,W_4|W_1,W_3)+H(W_2,W_4|{\bf y}_R^n-H_{1,R}X_1^n-H_{3,R}X_3^n)-2n\log\left(2\pi e\right)+n\epsilon_n\nonumber\\
&=& h({\bf y}_R^n|W_1)-H(W_2,W_4|W_1,W_3)+H(W_2,W_4|H_{2,R}X_2^n+H_{4,R}X_4^n+{\bf z}_R^n)-2n\log\left(2\pi e\right)+n\epsilon_n\nonumber\\
&\stackrel{(e)}{=}& h({\bf y}_R^n|W_1)-H(W_2,W_4|W_1,W_3)+H(W_2,W_4|[{X}_2^n \ {X}_4^n]+{\bf z}_R^n{\bf H}^{-1})-2n\log\left(2\pi e\right)+n\epsilon_n\nonumber\\
&\stackrel{(f)}{=}& h({\bf y}_R^n|W_1)-H(W_2,W_4|W_1,W_3)+H(W_2,W_4|{X}_2^n+z^{n{'}}_{2},{X}_4^n+z^{n'}_{4})-2n\log\left(2\pi e\right)+n\epsilon_n\nonumber\\
&\stackrel{(g)}{\le}& h({\bf y}_R^n|W_1)-H(W_2,W_4|W_1,W_3)+H(W_2|{X}_2^n+z^{n'}_{2})+H(W_4|{X}_4^n+z^{n'}_{4})-2n\log\left(2\pi e\right)+n\epsilon_n\nonumber\\
&\stackrel{(h)}{=}& h({\bf y}_R^n|W_1)-H(W_2,W_4|W_1,W_3)-2n\log\left(2\pi e\right)+n\epsilon^{'}_n+n\epsilon_n\nonumber\\
&=& h({\bf y}_R^n|W_1)-H(W_2,W_4)-2n\log\left(2\pi e\right)+n\epsilon^{''}_n\nonumber\\
&\stackrel{(i)}{\le}& h({\bf y}_R^n)-H(W_2,W_4)-2n\log\left(2\pi e\right)+n\epsilon^{''}_n\nonumber\\
&\stackrel{(j)}{\le}& h(Y_{R_1}^n)+h(Y_{R_2}^n)-H(W_2,W_4)-2n\log\left(2\pi e\right)+n\epsilon^{''}_n\nonumber\\
&\stackrel{(k)}{\le}& 2\left(\log\left(2\pi e(4H_{\max}^2P+1)\right)^{n}\right)-H(W_2,W_4)-2n\log\left(2\pi e\right)+n\epsilon^{''}_n,
\end{eqnarray}
where $(a)$ follows since the transmission rate is less than or equal to the mutual information between the message and the received signal, and $\epsilon_n$ can be arbitrarily small by increasing $n$; $(b)$ follows since $I(W_3;{\bf Y}_3^n|W_1)-I(W_3;{\bf Y}_3^n)=I(W_3;{\bf Y}_3^n;W_1)\ge-\min\{I(W_3;{\bf Y}_3^n),I(W_1;{\bf Y}_3^n),I(W_3;W_1)\} = 0$ {\large(}as $I(W_3;W_1)=0${\large)}; $(c)$ holds since $W_3\rightarrow {\bf y}_R^n\rightarrow Y_3^n$; $(d)$ follows since by subtracting the contributions of $X_i^n$, $i=1,\dots,4$ from ${\bf y}_R^n$, we will only have Gaussian noise at the relays; $(e)$ follows from the fact that  by defining ${\bf H}\triangleq[{\bf h}^T_{2,R} \ {\bf h}^T_{4,R}]^T$, we obtain the following:
\begin{equation}
\left({\bf y}_R^n-H_{1,R}X_1^n-H_{3,R}X_3^n\right){\bf H}^{-1}=[X_2^n \ X_4^n]^T+{\bf z}_R^n{\bf H}^{-1};
\end{equation}
$(f)$ holds by defining $[z^{n'}_{2} \ z^{n'}_{4}]\triangleq{\bf z}_R^n{\bf H}^{-1}$; $(g)$ holds since conditioning decreases entropy; $(h)$ follows from Fano's inequality and the fact that probability of error in decoding $W_i$ given ${X}_i^n+z^{n'}_{i}$, $i=2,4$ goes to zero for high SNR; $(i)$ holds because conditioning decreases the entropy; $(j)$ holds since $h(X,Y)\le h(X)+h(Y)$; $(k)$ holds since $Y_{R_i}$ is in the form of \eqref{wyek1}, with $|H_{i,R_k}[m]|\le H_{\max}$, and $X_i\sim \mathcal{CN}(0,P)$.




Dividing both sides of \eqref{gghhjh} by $n\log{P}$, and using $n(R_2+R_4-\epsilon^{'''}_n)\le I(W_2;Y_2)+I(W_4;Y_4)=H(W_2)-H(W_2|Y_2)+H(W_4)-H(W_4|Y_4)\le H(W_2)+H(W_4)=H(W_2, W_4)$, results in:
\begin{equation}
\frac{R_3}{\log{P}}{\le} \frac{2\log\left(2\pi e(4H_{\max}^2P+1)\right)^{n}}{n\log{P}}-\frac{(R_2+R_4)}{\log{P}}-\frac{2n\log\left(2\pi e\right)}{n\log{P}}+\frac{\epsilon^{''}_n}{\log{P}},
\end{equation}
and with $n\rightarrow\infty$ and $P\rightarrow\infty$, we obtain the following bound:
\begin{equation}\label{df}
d_2+d_3+d_4 \le 2.
\end{equation}
Similarly, we also have
\begin{eqnarray}
d_1+d_2+d_3 \le 2,\\
d_1+d_2+d_4 \le 2,\\
d_1+d_3+d_4 \le 2.\label{df2}
\end{eqnarray}
Summing up \eqref{df}-\eqref{df2} gives the upper bound in the statement of the theorem.
\end{proof}

\subsection{Symmetric Two-way $2\times2\times2$ Relay Network}\label{sec3}

In this section, we focus on a symmetric case of the two-way $2\times2\times2$ relay network in Fig. \ref{fig:1}, where the channel parameters are assumed to have the following symmetry: ${H}_{1,R_k}={H}_{3,R_k}$, ${H}_{2,R_k}={H}_{4,R_k}$, ${H}_{R_k,1}={H}_{R_k,3}$, and ${H}_{R_k,2}={H}_{R_k,4}$, $k\in\{1,2\}$. The two-hop decomposition of such a symmetric two-way $2\times2\times2$ relay network is shown in Fig. \ref{fig:2m}. The following result shows that the DoF for symmetric two-way $2\times2\times2$ relay network is four. Thus, this two-way network achieves twice the DoF as compared to the one-way network studied in \cite{gou2012aligned}. Essentially the symmetry in channel parameters allows for efficient alignment of the signals from the two directions at the relays that leads to the ${\text{DoF}}$ of four.

\begin{figure}[htbp]
\centering
\subfigure[The channels from transmitters to the relays.]{
	\includegraphics[width=10cm]{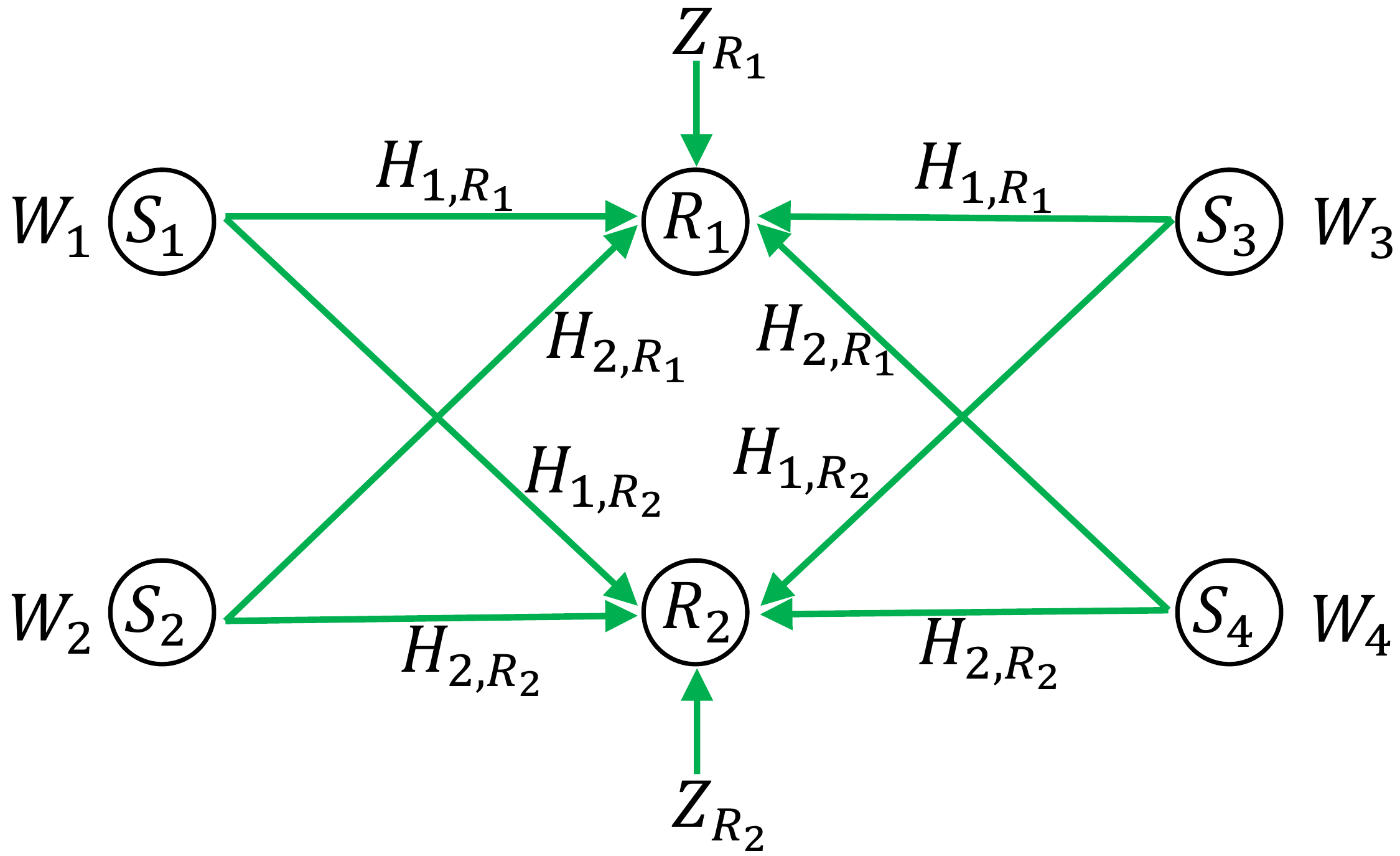}
    \label{fig:2.1m}
}
\subfigure[The channels from relays to the receivers.]{
	\includegraphics[width=10cm]{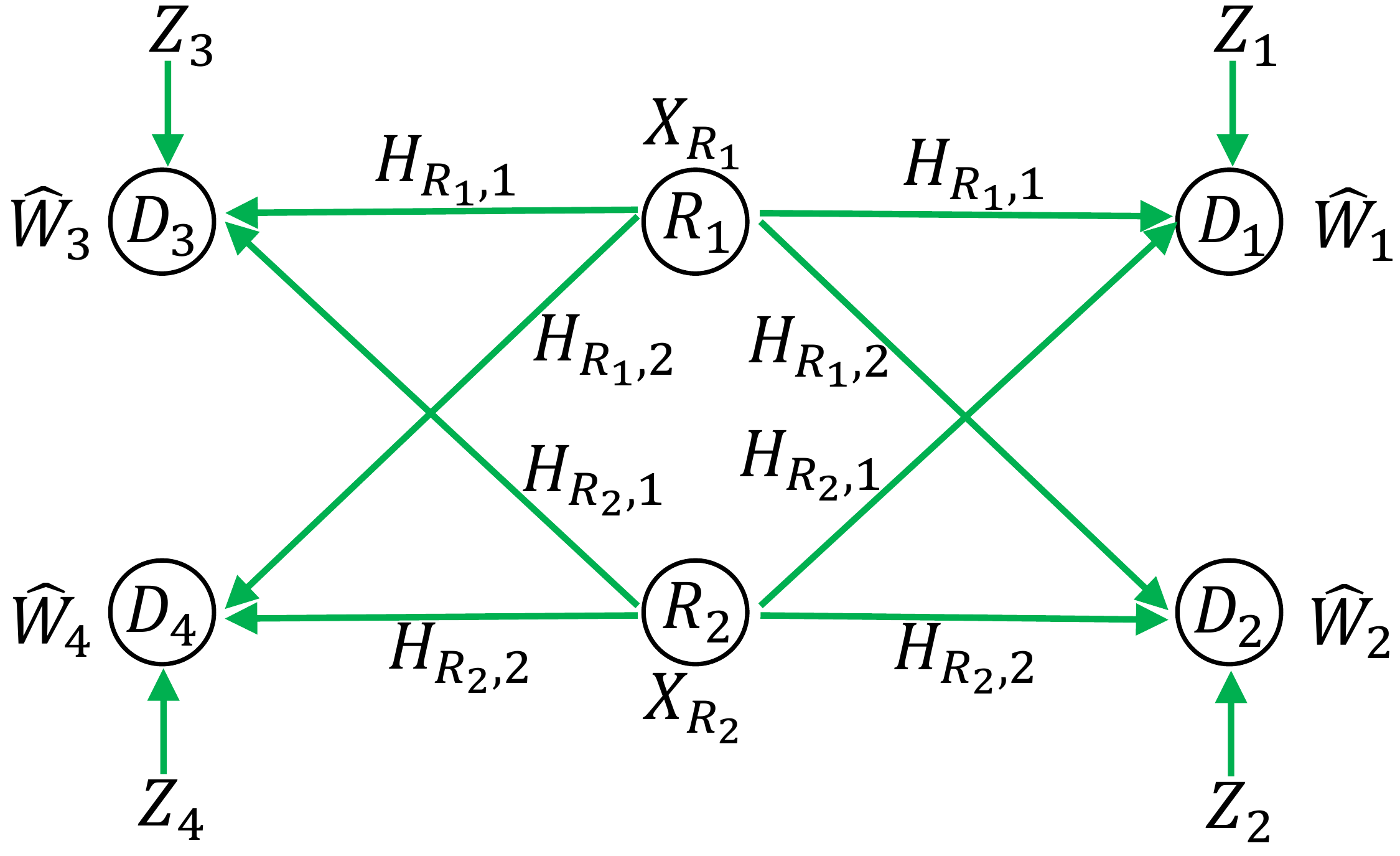}
    \label{fig:2.2m}
}
\caption[Optional caption for list of figures]{The channels from and to relays in a symmetric two-way $2\times2\times2$ relay network.}
\label{fig:2m}
\end{figure}

\begin{proposition}\label{thm_1_LBaa}
For the symmetric two-way $2\times2\times2$ relay network,  $DoF_{NC} = 4$.
\end{proposition}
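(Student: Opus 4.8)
\noindent\emph{Proof plan.} The plan is to establish the two bounds separately. For the \emph{converse}, observe that each source $S_i$ is a single-antenna node, so the cut separating $S_i$ from the rest of the network carries at most one degree of freedom; hence $d_i\le 1$ for every $i$ and $\text{DoF}_{NC}\le 4$. Note that the $8/3$ bound of Theorem~\ref{thm_2_LB} is \emph{not} available here: its step $(e)$ inverts $\mathbf{H}=[\mathbf{h}_{2,R}^T\ \mathbf{h}_{4,R}^T]^T$, which is singular in the symmetric case since $\mathbf{h}_{2,R}=\mathbf{h}_{4,R}$, so that argument collapses exactly as the symmetry suggests it should. It therefore remains to exhibit a scheme achieving $d_i=1$ for all four users simultaneously.

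The key structural fact I would exploit is that the symmetry collapses the network into an effective $2\times2\times2$ problem. In the first hop each relay sees $Y_{R_k}=H_{1,R_k}(X_1+X_3)+H_{2,R_k}(X_2+X_4)+Z_{R_k}$, so the co-paired transmitters $\{S_1,S_3\}$ and $\{S_2,S_4\}$ are \emph{aligned}: the two relays observe only the two aggregate streams $u\triangleq X_1+X_3$ and $v\triangleq X_2+X_4$ through the $2\times2$ matrix with rows $(H_{1,R_k},H_{2,R_k})$, which is full rank with probability one. Symmetrically, in the second hop $D_1$ and $D_3$ see the identical combination $H_{R_1,1}X_{R_1}+H_{R_2,1}X_{R_2}$ (and $D_2,D_4$ another common combination), so the four receivers act as two effective receivers fed through a full-rank $2\times2$ matrix. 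Thus the pair $(u,v)$ traverses a network that is identical \emph{in signal space} to the one-way $2\times2\times2$ relay network of Fig.~\ref{fig:222}, for which two DoF (one per stream) are achievable via real interference alignment \cite{shomorony2013two,gou2012aligned}. I would have each co-paired transmitter pair use a common nested-lattice codebook, so that the over-the-air sums $u$ and $v$ are themselves lattice points; the alignment/relaying scheme then delivers the lattice sum $\lambda_1\oplus\lambda_3$ to the $\{D_1,D_3\}$ receivers and $\lambda_2\oplus\lambda_4$ to the $\{D_2,D_4\}$ receivers, each at one DoF.

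It then remains to turn these two delivered sums into four decoded messages, and this is where the \emph{two-way} nature is essential. Since there are only four combined source/destination nodes, the configuration consistent with the symmetry places $S_1$ and $D_3$ at one node and $S_3$ and $D_1$ at the partner node (and likewise for the $\{2,4\}$ pair), so $D_1$ knows $W_3$ and $D_3$ knows $W_1$ as side information. From the common received sum $\lambda_1\oplus\lambda_3$, receiver $D_1$ subtracts the known $\lambda_3$ to recover $W_1$ and $D_3$ subtracts $\lambda_1$ to recover $W_3$; the $\{2,4\}$ pair is handled identically. Without this message knowledge the two receivers would observe the same scalar and could not decode distinct messages, so it is precisely the two-way structure that doubles the two relay DoF into four. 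Each user thus attains $d_i=1$, giving $\text{DoF}_{NC}\ge 4$, and with the converse $\text{DoF}_{NC}=4$.

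The step I expect to be the main obstacle is making the middle argument rigorous: showing that the one-way $2\times2\times2$ alignment scheme composes correctly with (i) a ``virtual source'' that is not a single encoder but the over-the-air superposition of two independent lattice codewords, and (ii) a ``virtual destination'' realized by two physically separate receivers, each decoding the \emph{sum} of codewords before removing its own side information. Because nested-lattice codebooks are closed under addition, the aggregates $u$ and $v$ remain valid lattice codewords and the alignment dimensions are preserved, so I expect the construction to go through at the DoF level; the delicate points are verifying that the generic full-rank conditions hold \emph{simultaneously} at both hops and confirming that decoding sums rather than individual codewords incurs no DoF penalty.
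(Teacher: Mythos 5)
Your proposal is correct and follows essentially the same route as the paper: the converse is the single-antenna cut-set bound, and the achievability pairs $\{S_1,S_3\}$ and $\{S_2,S_4\}$ into two virtual sources so that the symmetric network reduces to the one-way $2\times2\times2$ channel of \cite{gou2012aligned}, with each destination exploiting the two-way co-location ($D_1=S_3$, $D_3=S_1$, etc.) to peel its own codeword out of the delivered sum. The paper's Appendix realizes the composition you flag as the main obstacle by an explicit linear aligned-interference-neutralization scheme over $M$-symbol extensions in which both paired sources reuse the same beamformers (achieving $(4M-2)/M\rightarrow 4$), so the nested-lattice machinery is unnecessary---each receiver knows one summand exactly and simply subtracts it after estimating the sum.
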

\begin{proof}
For the achievability, $S_1$ and $S_3$ can be seen as one user and $S_2$ and $S_4$ can be seen as another user from the relay nodes' perspective, due to symmetry. Based on the result  for the one-way $2\times2\times2$ network in \cite{gou2012aligned}, each message can achieve one DoF. Since we increase the number of messages from $2$ in \cite{gou2012aligned} to $4$ in this paper, DoF$=4$ is achievable. The detailed achievability scheme is described in Appendix \ref{Appendix1}.



In addition, it can be seen that the upper bound follows from the cut-set bound.
\end{proof}

\begin{remark}
The comparison of Theorem \ref{thm_2_LB} and Proposition \ref{thm_1_LBaa} shows that, interestingly, even though without caching $8/3$ is an upper bound for generic channels, for symmetric channels $DoF=4$ which shows that this network topology can, in principle, allow $4$ DoF. Note that step $(e)$ in \eqref{gghhjh} for generic channel parameters requires the invertibility of ${\bf H}$, which does not hold for symmetric channels. This is what distinguishes generic channels (for which DoF $\le8/3$) from symmetric symmetric channels (for which DoF $= 4$).
\end{remark}

\section{ Two-way $2\times 2\times 2$ Network with Caching}\label{ofo2}

\subsection{Caching and Transmission Strategy}

In this subsection, we consider the more general model of multi-antenna relays and single-antenna source/destination nodes, where each relay $R_k$, $k\in\{1,2\}$ has $N_k$ antennas. For this model, the difference with the single-antenna case in Section \ref{vmnbx} is that channels $H_{i,R_k}[m]$, $H_{R_k,i}[m]$ are $N_k\times1$ and $1\times N_k$ vectors, respectively, the signals to and from the relays, ($Y_{R_k}[m]$ and $X_{R_k}[m]$, respectively), are vectors of size $N_k\times1$, and the noise $Z_{R_k}[m]$ is an $N_k\times1$ vector, $i\in\{1,2,3,4\}$, $k\in\{1,2\}$. We assume that each relay is equipped with a cache that can store the data from the sources. Our goal is to design strategies for caching and transmission so that the sum rate of all four source-destination pairs is maximized. Similar to caching strategies in the literature \cite{maddah2014fundamental,ji2015throughput}, the transmission consists of two phases. The first phase is the transmission from sources to the relays, as shown in Fig. \ref{fig:2.1}, which is performed offline and is known as the placement phase. The second phase is the transmission from relays to the destinations, as shown in Fig. \ref{fig:2.2}, which is performed online and is known as the delivery phase. We assume that the relays decode ${W}_i$, $i=1,\dots,4$ in the offline phase and save $W'_1\triangleq {W}_1\oplus {W}_3$, $W'_2\triangleq {W}_2\oplus {W}_4$ in their caches.  Then since both relays have access to $W'_1$ and $W'_2$, we can consider them together as an $(N_1+N_2)$-antenna relay, transmitting ${\bf x}^n_R=f(W'_1,W'_2)$, which intends to make $W'_1$ decodable at $D_1$ and $D_3$, and $W'_2$ decodable at $D_2$ and $D_4$ in Fig. \ref{fig:2.2}. 


The next result, after a short review of the compound broadcast channels, shows that the DoF for the two-way $2\times2\times2$ relay network with multiple-antenna relays and single-antenna transceivers is lower bounded by $\frac{4(N_1+N_2)}{N_1+N_2+1}$ under the above caching and transmission strategy.

\subsection{Background on Compound Broadcast Channel}

Here, we briefly introduce the compound broadcast channel and list two lemmas that we need. The Gaussian MISO compound broadcast channel comprises one transmitter with  $N$ antennas and $K$ single-antenna receivers. The transmitter transmits $K$ messages, each intended for a different receiver $i$ whose channel state is chosen from a finite set $\{1,\dots,J_i\}$, $i=1,\dots,K$. In the literature there are several results on the behavior of this channel at high SNR, i.e., the ${\text{DoF}}$. We cite the following two lemmas on the lower and upper bounds of the compound broadcast channel, respectively.

\begin{lemma}\label{dfgl2}{\mbox{\cite{gou2009degreesa,maddah2010degrees}}}
For the compound broadcast channel with $N$ antennas at the transmitter, $K$ single-antenna receivers, and $J_i \ge N$ states at receiver $i$, $i = 1, \dots, K$,  the total DoF of $\frac{NK}{N+K-1}$ is achievable.
\end{lemma}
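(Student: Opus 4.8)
The statement claims only that $\frac{NK}{N+K-1}$ DoF is \emph{achievable}, so the plan is to exhibit a linear interference-alignment scheme over symbol extensions that is robust to all of the compound states. First I would fix a block of $T=N+K-1$ channel uses and aim to deliver exactly $N$ independent data symbols to each of the $K$ single-antenna receivers; the per-user DoF is then $N/T=\frac{N}{N+K-1}$ and the total is $\frac{NK}{N+K-1}$, as required. Over such a block the transmitter's space-time signal lives in $\mathbb{C}^{NT}$, and a receiver $i$ in state $j$ observes a $T$-dimensional signal $\mathbf{y}_i=\mathbf{H}_{i,j}\mathbf{x}$, where $\mathbf{H}_{i,j}\in\mathbb{C}^{T\times NT}$ is the (block-diagonal) effective channel of user $i$ in state $j$ over the block. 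Consistent with the compound model, the transmitter is assumed to know all $J_i$ candidate channels $\mathbf{H}_{i,1},\dots,\mathbf{H}_{i,J_i}$ but not which one is active.

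Next I would impose the alignment geometry. I assign user $l$ an $N$-dimensional precoding subspace $\mathbf{V}_l\subseteq\mathbb{C}^{NT}$ carrying its $N$ symbols. At receiver $i$ the received space has dimension $T=N+K-1$, so reserving $N$ dimensions for the desired signal leaves exactly $K-1$ dimensions for the interference from the $K-1$ other users. The crucial requirement is therefore that each interferer collapse to a single received dimension, and that this hold in every state of receiver $i$: I demand $\mathrm{rank}(\mathbf{H}_{i,j}\mathbf{V}_l)=1$ for all $l\ne i$ and all $j\in\{1,\dots,J_i\}$, while $\mathbf{H}_{i,j}\mathbf{V}_i$ has rank $N$ and is complementary to the aggregate interference subspace. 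The decoding and DoF accounting are then immediate: in any active state combination each receiver sees an $N$-dimensional desired subspace and an at-most-$(K-1)$-dimensional interference subspace that together span its $T$-dimensional observation, so a zero-forcing receiver recovers all $N$ symbols, yielding total DoF $\frac{NK}{N+K-1}$.

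The hard part, and the step I expect to be the main obstacle, is proving that precoders $\mathbf{V}_1,\dots,\mathbf{V}_K$ meeting all the rank-one alignment constraints simultaneously across every state actually exist. This is precisely where the hypothesis $J_i\ge N$ and the genericity of the channel enter: each additional state adds a fresh batch of alignment constraints, and $J_i\ge N$ is the regime in which $\frac{NK}{N+K-1}$ is the relevant worst-case value rather than something larger. Following \cite{gou2009degreesa,maddah2010degrees}, I would give an explicit structured construction of the $\mathbf{V}_l$ built from the state channel matrices, count the available design freedom against the number of bilinear rank constraints, and verify that for generic channels the rank-one alignment holds at every interfered receiver while the desired subspaces remain full-rank and complementary. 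No converse is required here, since the lemma asserts only achievability.
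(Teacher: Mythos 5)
The paper offers no proof of this lemma --- it is imported verbatim from \cite{gou2009degreesa,maddah2010degrees} --- so the comparison has to be with the proofs in those references, and there your plan diverges in a way that is fatal rather than merely different. The gap sits exactly in the step you flag as the hard part: the precoders you ask for do not exist once $J_i$ exceeds $N$, and the lemma allows $J_i$ to be arbitrarily large. In the compound model the channel is constant in time, so over a $T$-symbol extension the effective channel of receiver $i$ in state $j$ is $\mathbf{H}_{i,j}=I_T\otimes\mathbf{h}_{i,j}^T$, and your condition $\mathrm{rank}(\mathbf{H}_{i,j}\mathbf{V}_l)=1$ decomposes per time slot: writing the $m$-th column of $\mathbf{V}_l$ as $(\mathbf{v}_{l,m}^{(1)},\dots,\mathbf{v}_{l,m}^{(T)})$ with $\mathbf{v}_{l,m}^{(t)}\in\mathbb{C}^N$, it forces $\mathbf{h}_{i,j}^T\bigl(\mathbf{v}_{l,m}^{(t)}-c_{j,m}\mathbf{v}_{l,1}^{(t)}\bigr)=0$ for every $t$ and some scalars $c_{j,m}$. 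For fixed $m$ and $t$ these are $J_i$ generic affine constraints on the $N$ unknowns of $\mathbf{v}_{l,m}^{(t)}$: solvable when $J_i\le N$, but generically inconsistent when $J_i>N$ unless the columns degenerate to $\mathbf{v}_{l,m}^{(t)}=c\,\mathbf{v}_{l,1}^{(t)}$, i.e.\ unless $\mathbf{V}_l$ itself collapses to rank one and user $l$ loses its $N$ streams. (Already for $N=2$, $J_i=3$ you are asking a point of $\mathbb{C}^2$ to lie on three generic affine lines.) Symbol extension buys you nothing here because the channel is fixed within a state, so no new channel matrices ever appear.

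This obstruction is not a repairable bookkeeping issue; it is precisely why Weingarten et al.\ conjectured that the compound MISO broadcast channel's DoF collapses as the number of states grows, and the entire point of \cite{gou2009degreesa,maddah2010degrees} is that the conjecture fails once one abandons vector-space alignment. Their achievability sends $N$ generically precoded streams per user over a \emph{single} channel use and aligns interference in signal scale rather than in signal space: each receiver's scalar observation is split into $N+K-1$ ``rational dimensions,'' the $N$ desired streams occupying $N$ of them and the $N(K-1)$ interference streams collapsing into the remaining $K-1$, with separability guaranteed by Diophantine-approximation (Khintchine--Groshev type) arguments that hold simultaneously for any finite collection of states. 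To make your proof work you would either have to import that real-interference-alignment machinery or weaken the lemma to $J_i=N$ exactly; as written, the construction cannot be completed for general $J_i\ge N$.
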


\begin{lemma}\label{dfgl1} {\mbox{\cite{weingarten2007compound}}}
Consider a compound broadcast channel with $N$ antennas at the transmitter, and $K=2$ single-antenna receivers with $J_1 = 1$, $J_2 = 2$. Then the DoF region is outer bounded by the following region
\begin{eqnarray}
\left(\frac{1}{N}\right)d_{1} +d_{2} &\le& 1,\label{qwe1}\\
d_{1} +\left(\frac{1}{N}\right)d_{2} &\le& 1.\label{qwe2}
\end{eqnarray}
\end{lemma}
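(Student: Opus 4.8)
The two inequalities \eqref{qwe1} and \eqref{qwe2} are of the same type — each anchors the decoding of one message and bounds the degrees of freedom the other message can add — so the plan is to establish one of them in detail, say \eqref{qwe2}, $d_1+\frac{1}{N}d_2\le 1$, and obtain \eqref{qwe1} by the analogous construction that anchors $W_2$ (at both of its states) instead of $W_1$. Write the $n$-letter model with transmit sequence $X^n\in\mathbb{C}^{N\times n}$, receiver-$1$ observation $Y_1^n=\mathbf{h}^{T}X^n+Z_1^n$, and, for the compound receiver, the two state observations $Y_{2,1}^n=\mathbf{g}_1^{T}X^n+Z_{2,1}^n$ and $Y_{2,2}^n=\mathbf{g}_2^{T}X^n+Z_{2,2}^n$, where $W_2$ must be decodable from $Y_{2,1}^n$ and from $Y_{2,2}^n$ \emph{separately}. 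By Fano's inequality, $nR_1\le I(W_1;Y_1^n)+n\epsilon_n$ and $nR_2\le I(W_2;Y_{2,j}^n)+n\epsilon_n$ for each $j\in\{1,2\}$.

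First I would convert these rate bounds into differential-entropy statements and introduce a genie that reveals $W_1$ (and, where needed, a generic auxiliary linear image of $X^n$) to the compound receiver, so as to expose a degraded ordering between receiver $1$ and the compound receiver. The essential point — and the reason the bound is nontrivial — is that the converse must retain the per-state decodability of $W_2$: one is \emph{not} allowed to merge $Y_{2,1}^n$ and $Y_{2,2}^n$ into a single multi-antenna observation, since that relaxes the compound constraint and yields only the loose bound $d_1+d_2\le N$. Keeping the two states separate is exactly what prevents the transmitter from zero-forcing $W_1$ away from every state of the compound user when the number of states is comparable to $N$, and this is what produces the $\frac{1}{N}$ coefficient rather than a weight of $1$.

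The main technical step is the Gaussian optimization: after the genie reduction, the weighted combination $nR_1+\frac{1}{N}nR_2$ reduces to a difference of conditional differential entropies of Gaussian-perturbed linear images of $X^n$, which I would bound using the extremal inequality of Liu--Viswanath (equivalently, the channel-enhancement technique of Weingarten--Steinberg--Shamai underlying \cite{weingarten2007compound}), establishing both that a Gaussian input is optimal and that a common covariance maximizes the relevant entropy terms. I expect this covariance optimization to be the hard part, since it is where the compound (worst-state) structure must be encoded so that the spatial directions seen by the two states and by receiver $1$ interact correctly; a purely DoF-level alternative would be a deterministic / aligned-image-set argument that sidesteps the Gaussian optimization but requires its own careful dimension counting. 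Finally, dividing by $n\log P$ and letting $n\to\infty$ and $P\to\infty$ extracts the pre-log coefficients and yields \eqref{qwe2}; anchoring $W_2$ instead gives \eqref{qwe1}, and together they describe the stated outer region.
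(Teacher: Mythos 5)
First, a point of context: the paper does not prove this lemma at all --- it is imported verbatim from \cite{weingarten2007compound} and used as a black box, so there is no in-paper proof to compare your argument against. Your proposal therefore has to stand on its own as a reconstruction of the Weingarten--Shamai--Kramer converse, and judged that way it is an outline rather than a proof. The entire content of the lemma is the coefficient $\frac{1}{N}$, and the step that would produce it --- the bound on the weighted sum $nR_1+\tfrac{1}{N}nR_2$ via a difference of conditional differential entropies, followed by the covariance optimization --- is exactly the step you defer (``I would bound using the extremal inequality \dots I expect this covariance optimization to be the hard part''). Nothing in the chain you do write down (Fano at each state, a genie revealing $W_1$, the observation that the states must be kept separate) forces a weight of $\frac{1}{N}$ rather than, say, $\frac{1}{2}$ or $1$; the place where the number of transmit antennas enters the entropy accounting is never exhibited. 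So the proposal identifies the right toolbox and the right reference, but the proof's essential inequality is asserted, not derived.

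Second, the symmetry reduction you lean on is not available here. The channel is \emph{not} symmetric under swapping the two users: receiver $1$ has $J_1=1$ state while receiver $2$ has $J_2=2$ states, so \eqref{qwe1} cannot be obtained from \eqref{qwe2} by ``the analogous construction that anchors $W_2$ instead of $W_1$.'' Indeed, for the non-compound MISO BC ($J_1=J_2=1$, $N\ge 2$) the point $(d_1,d_2)=(1,1)$ is achievable by zero-forcing, so neither inequality can hold without invoking receiver $2$'s multiple states; both bounds must be proved by arguments that exploit the per-state decodability of $W_2$, and they are genuinely different arguments because the genie/enhancement construction sees a one-state user on one side and a two-state user on the other. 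Relatedly, your remark that one is ``not allowed'' to merge $Y_{2,1}^n$ and $Y_{2,2}^n$ into a multi-antenna observation is imprecise: handing a receiver extra observations is always a \emph{valid} outer-bounding step (it can only enlarge the region); the issue is that this particular relaxation is too loose. In fact, the standard route to the $\frac{1}{N}$-type coefficient does combine the mutual-information bounds from the \emph{several} states of user $2$ (e.g.\ summing $I(W_2;Y_{2,j}^n)$ over $j$ and lower-bounding the joint conditional entropy $h(Y_{2,1}^n,\dots|W_2)$ by the rate of $W_1$ once enough generic observations are collected), which is close to the opposite of the ``never merge the states'' principle you state. I would recommend either carrying out that entropy chain explicitly for $N=2$ (the only case the paper's main single-antenna result actually needs) or simply citing the lemma as the paper does.
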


\subsection{DoF of Two-way $2\times2\times2$ Relay Network with Caching}

In the following, we provide a result on the achievability of the two-way $2\times2\times2$ relay network with multiple-antenna relays and single-antenna transceivers with caching:

\begin{proposition}\label{prop1}
Under the caching and transmission strategy given above,  ${\text{DoF}} \ge \frac{4(N_1+N_2)}{N_1+N_2+1}$ for the two-way $2\times2\times2$ relay network with multiple-antenna relays and single-antenna transceivers.
\end{proposition}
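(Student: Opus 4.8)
The plan is to analyze only the delivery phase and reduce it to a compound MISO broadcast channel, then invoke Lemma \ref{dfgl2}. Under the stated strategy the placement phase is offline and therefore not charged against the delivery DoF, so after it both relays hold $W'_1=W_1\oplus W_3$ and $W'_2=W_2\oplus W_4$ in their caches. Since the two relays share the same cached content, I would treat the pair $(R_1,R_2)$ as a single transmitter with $N_1+N_2$ antennas whose task, in the second hop of Fig. \ref{fig:2.2}, is to make $W'_1$ decodable at both $D_1$ and $D_3$ and $W'_2$ decodable at both $D_2$ and $D_4$.

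The key identification is that requiring one message to be decoded at two receivers having distinct channel vectors is precisely a compound MISO broadcast channel. Here the transmitter has $N=N_1+N_2$ antennas, there are $K=2$ messages ($W'_1$ and $W'_2$), and each message must be recovered under $J_i=2$ channel realizations, namely the vectors from the combined relay to its two intended destinations. Applying Lemma \ref{dfgl2} with $N=N_1+N_2$ and $K=2$ then gives an achievable sum-DoF of $\frac{2(N_1+N_2)}{N_1+N_2+1}$ for the two cached messages, which by symmetry I would split so that each of $W'_1$ and $W'_2$ carries DoF $\frac{N_1+N_2}{N_1+N_2+1}$.

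The last step converts the DoF of the XOR messages into DoF of the four original messages, and this is where the two-way structure doubles the total. By the bidirectional nature of the network, $D_1$ already knows $W_3$ as side information, so from $W'_1=W_1\oplus W_3$ it recovers its desired $W_1$; symmetrically $D_3$ recovers $W_3$ using its knowledge of $W_1$. Hence delivering $W'_1$ at DoF $d'$ simultaneously delivers $W_1$ and $W_3$ at DoF $d'$ each, giving $d_1=d_3=\frac{N_1+N_2}{N_1+N_2+1}$, and the same argument with $W'_2$ gives $d_2=d_4$. Summing over the four users yields $2\cdot\frac{2(N_1+N_2)}{N_1+N_2+1}=\frac{4(N_1+N_2)}{N_1+N_2+1}$, the claimed bound.

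I expect the main obstacle to be making the reduction to the compound broadcast channel fully rigorous. In particular I must check that the two-destination multicast of each XOR message satisfies the hypotheses of Lemma \ref{dfgl2}; the condition on the number of states relative to $N$ is cleanest when $N_1=N_2=1$, where $N=2=J_i$ and the bound specializes to the $8/3$ value of Theorem \ref{thm_2_LB}, whereas for larger relays one must argue that reducing the number of states cannot decrease the achievable DoF so that the stated expression remains a valid lower bound. A secondary item to verify is that the generic relay-to-destination channel vectors are in sufficiently general position for the compound-BC scheme to apply and that the side information invoked in the XOR decoding is genuinely furnished by the two-way pairing.
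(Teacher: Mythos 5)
Your proposal is correct and follows essentially the same route as the paper's proof: treat the two relays holding $W'_1$ and $W'_2$ as a single $(N_1+N_2)$-antenna transmitter, reduce the delivery phase to a compound MISO broadcast channel with $K=2$ messages each required at two receivers, apply Lemma \ref{dfgl2}, and double the resulting DoF because the two-way side information lets each XOR message serve two destinations. Your extra care about the $J_i \ge N$ hypothesis of Lemma \ref{dfgl2} (and the observation that fewer states can only help achievability) is a point the paper itself only addresses in the subsequent proposition, so it is a welcome refinement rather than a different argument.
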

\begin{proof}
In our transmission strategy, the relays amplify-and-forward the encoded data available in their caches. We treat the two relays together as a super-relay with two antennas that has access to $W'_1$ and $W'_2$. The super-relay intends to  make $W'_1$ decoded at $D_1$ and $D_3$, and $W'_2$ decoded at $D_2$ and $D_4$ since each receiver can decode the desired message by cancelling the contribution of its own message. This becomes equivalent to a compound MISO broadcast channel where message $W'_1$ should be received at both $D_1$ and $D_3$, while message $W'_2$ should be received at both $D_2$ and $D_4$ as depicted in Fig. \ref{fig:1q}. Thus, using Lemma \ref{dfgl2} with  $N=N_1+N_2$ and $K=2$, we obtain the DoF of $\frac{2(N_1+N_2)}{N_1+N_2+1}$ which needs to be multiplied by $2$ due the fact that each of the signals $W'_1$ and $W'_2$  is decoded by two receivers in the original channel. Hence we obtain the DoF of $\frac{4(N_1+N_2)}{N_1+N_2+1}$ with caching.
\end{proof}

\begin{figure}[htbp]
\centering
	\includegraphics[width=6.5cm]{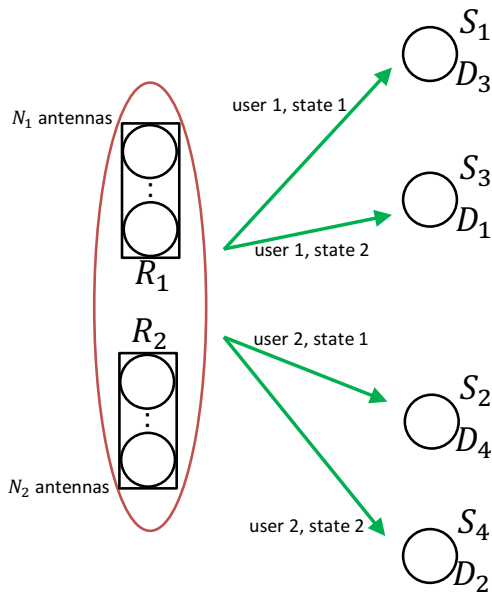}
\caption{ Compound MISO broadcast channel as achievability for two-way $2\times2\times2$ relay network with multiple-antenna relays and single-antenna transceivers and with relay caching.}
\label{fig:1q}
\end{figure}

Now, we provide a discussion on the optimal transmission strategy for the channel with relay caching. Consider the following two transmission strategies for the relays:
\begin{itemize}
\item Encode $W_1$ and transmit  with power $P$: It is helpful for receiver 1, has no effect for receiver 3, and is treated as interference for the other two receivers.
\item Encode $W'_1$ and transmit  with power $P$: It is helpful for receiver 1 (exactly the same effect as in the previous case), and is also helpful for receiver 3 (in contrast to the previous case), and is treated as interference for the other two receivers similar to the previous case.
\end{itemize}
Comparison of the above two strategies suggests that if there is an achievability scheme where all of the messages are decodable for some given total transmission power at the relays, then there is also a strategy with the same power that comprises only $W'_1$ and $W'_2$. Note that receivers 1 and 3 can decode their desired messages by having access to $W'_1$ (because they have access to each other's message and can subtract it), and similar relation holds for receivers 2 and 4.  With this assumption, we present the following result that gives the sum DoF $= \frac{4(N_1+N_2)}{N_1+N_2+1}$.

\begin{proposition}
For the two-way $2\times2\times2$ relay network with multiple-antenna relays and single-antenna transceivers and with caching at the relays, total ${\text{DoF}}_{C} = \frac{4(N_1+N_2)}{N_1+N_2+1}$ if the relays only use $W_1^{'}$ and $W_2^{'}$ in their transmission rather than the original individual messages.
\end{proposition}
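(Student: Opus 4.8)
The lower bound ${\text{DoF}}_C \ge \frac{4(N_1+N_2)}{N_1+N_2+1}$ is already supplied by Proposition \ref{prop1}, so the plan is to establish the matching converse: under the stated restriction that the relays transmit only functions of $W'_1$ and $W'_2$, I would show ${\text{DoF}}_C \le \frac{4(N_1+N_2)}{N_1+N_2+1}$. The whole argument is to recognize the delivery phase as a compound MISO broadcast channel and then invoke the outer bound of Lemma \ref{dfgl1}.

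First I would merge the two relays into a single super-relay with $N \triangleq N_1+N_2$ transmit antennas, whose transmitted signal is by hypothesis a function of $W'_1$ and $W'_2$ only. Since $W'_1 = W_1\oplus W_3$ must be recovered at both $D_1$ and $D_3$, and $W'_2 = W_2\oplus W_4$ at both $D_2$ and $D_4$, the second hop is exactly a compound MISO broadcast channel with $N$ antennas, $K=2$ coded messages, and $J_1=J_2=2$ receiver states (the two physical destinations tied to each coded message). In the two-way network each destination knows the message originating at its own node, so $D_1$ (resp.\ $D_3$) cancels $W_3$ (resp.\ $W_1$) from $W'_1$, and likewise for $D_2,D_4$; hence $|W_1|,|W_3|\le|W'_1|$ and $|W_2|,|W_4|\le|W'_2|$, giving $d_1,d_3\le d_{W'_1}$ and $d_2,d_4\le d_{W'_2}$, where $d_{W'_1},d_{W'_2}$ denote the DoFs of the two coded messages. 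Summing, ${\text{DoF}}_C = d_1+d_2+d_3+d_4 \le 2\left(d_{W'_1}+d_{W'_2}\right)$.

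Next I would apply Lemma \ref{dfgl1}. Although it is stated for $J_1=1$, $J_2=2$ whereas our channel has $J_1=J_2=2$, removing a state from a compound receiver only relaxes a decoding constraint and therefore enlarges the DoF region; thus the DoF region of our $J_1=J_2=2$ channel is contained in the region of Lemma \ref{dfgl1}, so both \eqref{qwe1} and \eqref{qwe2} hold for the pair $(d_{W'_1},d_{W'_2})$. Adding them yields $\left(1+\tfrac1N\right)\left(d_{W'_1}+d_{W'_2}\right)\le 2$, i.e. $d_{W'_1}+d_{W'_2}\le \frac{2N}{N+1}$. Combining with the previous step gives ${\text{DoF}}_C \le \frac{4N}{N+1} = \frac{4(N_1+N_2)}{N_1+N_2+1}$, which together with Proposition \ref{prop1} establishes the claimed equality.

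The main obstacle I anticipate is making watertight the passage from the coded-message DoFs $d_{W'_1},d_{W'_2}$ to the four individual DoFs: one must argue that the two-way side information at each destination resolves the XOR exactly, so that the factor of two in ${\text{DoF}}_C\le 2(d_{W'_1}+d_{W'_2})$ neither loses nor double-counts DoF. A related point to justify carefully is the monotonicity of the compound-channel DoF region in the number of states $J_i$, which is what transfers Lemma \ref{dfgl1} from its $J_1=1$ configuration to our $J_1=2$ configuration; and finally that, as argued in the discussion preceding the statement, restricting the relays to transmit only functions of $W'_1,W'_2$ is without loss within the class of strategies considered.
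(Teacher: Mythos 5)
Your proposal is correct and follows essentially the same route as the paper: achievability from Proposition \ref{prop1}, then merging the relays into an $(N_1+N_2)$-antenna super-relay, viewing the delivery phase as a compound MISO broadcast channel, reducing the number of states at one receiver to invoke Lemma \ref{dfgl1}, and multiplying by two because each coded message serves two destinations. The only (immaterial) difference is that you sum the two outer-bound inequalities directly to get $d_{W'_1}+d_{W'_2}\le \frac{2N}{N+1}$, whereas the paper identifies the corner points of the region; your explicit bookkeeping of $d_1,d_3\le d_{W'_1}$ and $d_2,d_4\le d_{W'_2}$ is in fact a cleaner justification of the factor of two than the paper's one-line remark.
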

\begin{proof}
The achievability follows from Proposition \ref{prop1}. Now we give the proof of the outer bound. With the assumption that the relays only transmit $W'_1$ and $W'_2$. The channel can be seen as a compound broadcast channel with two receivers, where each receiver takes two possible states as in Fig. \ref{fig:1q}. We know that decreasing the number of channel states does not decrease the capacity \cite{gou2009degreesa}. So, we decrease the number of states in receiver 1 to only $1$. Then, according to Lemma \ref{dfgl1}, the DoF region of the compound channel is bounded by \eqref{qwe1}-\eqref{qwe2} for  $N=N_1+N_2$, and bounds $\left(\frac{1}{N}\right)d_{1} +d_{2} \le 1$ and $d_{1} +\left(\frac{1}{N}\right)d_{2}\le1$. These two bounds give a convex region with three non-zero corners of $(d_1,d_2)=\left(\frac{N_1+N_2}{N_1+N_2+1},\frac{N_1+N_2}{N_1+N_2+1}\right)$, $(d_1,d_2)=(1,0)$ and $(d_1,d_2)=(0,1)$. Therefore, $d_1=d_2=\frac{N_1+N_2}{N_1+N_2+1}$ results in the largest value of  $d_1+d_2=\frac{2(N_1+N_2)}{N_1+N_2+1}$ which as before needs to be multiplied by $2$ due the fact that each of the signals $W'_1$ and $W'_2$ is decoded by two receivers in the original channel. Therefore we obtain $\frac{4(N_1+N_2)}{N_1+N_2+1}$ as a DoF upper bound.
\end{proof}
The above result leads to the following conjecture on the general upper bound:
\begin{conjecture}\label{thm_2_UB2w}
For the two-way $2\times2\times2$ relay network with multiple-antenna relays and single-antenna transceivers and with caching at the relays, the total ${\text{DoF}}_{C}\le \frac{4(N_1+N_2)}{N_1+N_2+1}$.
\end{conjecture}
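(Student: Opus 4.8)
The plan is to establish the converse by reducing the delivery phase to a compound MISO broadcast channel and then invoking Lemma \ref{dfgl1}. First I would note that, since both relays hold the same cached content, they can be treated as a single cooperative transmitter with $N \triangleq N_1+N_2$ antennas, and that supplying a genie link between the relays only enlarges the rate region, so this merge is a valid outer bound. The delivery phase then becomes an $N$-antenna MISO broadcast channel to the four single-antenna receivers $D_1,\dots,D_4$, in which $D_1$ and $D_3$ each possess the other's message as side information (and likewise $D_2$ and $D_4$). The goal is to bound $d_1+d_2+d_3+d_4$ by $\tfrac{4N}{N+1}$.

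The core of the argument would be to show that restricting the relays to transmit only the network-coded messages $W_1'$ and $W_2'$ is without loss of optimality for the sum-DoF, which is exactly the content of the two-option comparison preceding the conjecture. Once this reduction is in force, the channel is precisely the two-state compound MISO broadcast channel of Fig.~\ref{fig:1q}: $W_1'$ must be decoded at both $D_1$ and $D_3$ and $W_2'$ at both $D_2$ and $D_4$, with $d_1=d_3=d_{W_1'}$ and $d_2=d_4=d_{W_2'}$. Following the converse of the preceding Proposition, I would reduce one receiver to a single state and apply Lemma \ref{dfgl1} with $N=N_1+N_2$ to obtain
\begin{equation}
\tfrac{1}{N}\,d_{W_1'}+d_{W_2'}\le 1,\qquad d_{W_1'}+\tfrac{1}{N}\,d_{W_2'}\le 1,
\end{equation}
whose symmetric maximizer is $d_{W_1'}=d_{W_2'}=\tfrac{N}{N+1}$. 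Since each network-coded stream is decoded by two receivers, the total DoF is $2(d_{W_1'}+d_{W_2'})\le \tfrac{4N}{N+1}$, as claimed. To make the reduction rigorous I would attempt a genie-aided converse: supply each receiver with its partner's message so that the pair $(D_1,D_3)$ acts as the two states of a single compound receiver that must recover the combined group-$X$ content, bound the combined rate $R_1+R_3$ against the resulting compound constraint, and close the argument by the symmetric treatment of $(D_2,D_4)$, which would force $d_1=d_3$ and $d_2=d_4$ at the optimum.

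The main obstacle is precisely this reduction. The comparison of the two transmission options argues that replacing $W_1$ by $W_1'$ serves $D_3$ at no extra cost, but this is heuristic: transmitting $W_1'$ to \emph{both} $D_1$ and $D_3$ forces the two-state compound requirement and its $\tfrac{1}{N+1}$ penalty, whereas delivering $W_1$ and $W_3$ as separate streams can sidestep that penalty when antennas are abundant. Indeed, once the relays have decoded all messages in the placement phase, for $N_1+N_2\ge 4$ they can simply zero-force four independent messages to the four receivers and attain sum-DoF $4>\tfrac{4N}{N+1}$, so any valid converse must exploit a structural reason why the four messages cannot be delivered independently. I therefore expect the hardest step to be either formalizing a constraint that rules out independent (non-network-coded) delivery, or restricting the claim to the antenna-limited regime, notably $N_1=N_2=1$, where $\tfrac{4N}{N+1}=\tfrac{8}{3}$ recovers the single-antenna bound of Theorem \ref{thm_2_LB}. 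Establishing a matching converse in the general case remains the crux, which is why the statement is posed as a conjecture.
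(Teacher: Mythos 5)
The statement you were asked to prove is posed in the paper as a \emph{conjecture} and carries no proof: the authors only establish the bound $\frac{4(N_1+N_2)}{N_1+N_2+1}$ under the extra hypothesis that the relays transmit functions of $W_1'$ and $W_2'$ alone (the Proposition immediately preceding the conjecture), supported by an informal two-option comparison. Your proposal reconstructs exactly that restricted converse --- merging the two relays into a single $(N_1+N_2)$-antenna transmitter, reducing the delivery phase to the two-state compound MISO broadcast channel of Fig.~\ref{fig:1q}, and invoking Lemma~\ref{dfgl1} after collapsing one receiver to a single state --- and then correctly isolates the step the paper leaves open, namely that restricting the relays to network-coded transmission is without loss of sum-DoF optimality. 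There is therefore no divergence from the paper to report: you have located precisely the missing ingredient that makes this a conjecture rather than a theorem, and your honest refusal to paper over it is the right call.

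Your zero-forcing observation is the most substantive part of the proposal and deserves emphasis. If the relays are allowed to cache and forward the individual messages (which the offline placement phase permits, and both relays know all delivery-phase channels by the stated CSI assumptions), then for $N_1+N_2\ge 4$ the two relays can jointly zero-force four independent streams to the four single-antenna destinations and attain sum-DoF $4>\frac{4(N_1+N_2)}{N_1+N_2+1}$. This shows the conjectured bound cannot hold as a fully general converse: it must either inherit the restriction of the preceding Proposition (cache contents limited to $W_1'$ and $W_2'$) or be confined to the antenna-limited regime $N_1+N_2\le 3$, the case $N_1=N_2=1$ giving $8/3$ consistently with Theorem~\ref{thm_2_LB}. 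In short, your analysis is not merely an unfinished proof of the conjecture; it is evidence that the conjecture, as literally stated, needs qualification.
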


\begin{remark}
The results in Section \ref{thm_2_UB_Secqq} show that the DoF of two-way $2\times2\times2$ relay network with no relay caching is bounded as $2 \le {\text{DoF}}_{NC} \le 8/3$ and Proposition \ref{prop1} shows that ${\text{DoF}}_{C} = 8/3$ is achievable with relay caching for $N_1=N_2=1$. Hence caching can achieve the upper bound of the non-caching DoF of the two-way multiple-unicast network, thus showing that relay caching in this network could potentially improve the DoF.
\end{remark}

Finally, the following corollary gives the DoF of symmetric $2\times2\times2$ relay networks with caching. 


\begin{corollary}
For the symmetric two-way $2\times2\times2$ relay network with relay caching, ${\text{DoF}}_{C}= 4$.
\end{corollary}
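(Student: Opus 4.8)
The plan is to prove the two inequalities $\text{DoF}_C \ge 4$ and $\text{DoF}_C \le 4$ separately, each by reduction to a fact already in hand rather than by any new computation.

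For the achievability half, I would observe that a relay equipped with a cache is never less capable than one without: taking the cached information $C_{R_k}$ to be empty reduces the relay map $X_{R_k}[m]=f(Y_{R_k}^{m-1},X_{R_k}^{m-1},C_{R_k})$ to exactly the causal non-caching map. Hence every rate tuple, and in particular every DoF tuple, achievable without caching remains achievable with caching, so $\text{DoF}_C \ge \text{DoF}_{NC}$. Since Proposition \ref{thm_1_LBaa} already establishes $\text{DoF}_{NC}=4$ for the symmetric network (via the alignment scheme of \cite{gou2012aligned} detailed in Appendix \ref{Appendix1}), this immediately gives $\text{DoF}_C \ge 4$.

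For the converse, I would invoke only the single-antenna cut-set bound at each destination. Each $D_i$ observes the scalar $Y_i[m]=\sum_{k=1}^{2}H_{R_k,i}[m]X_{R_k}[m]+Z_i[m]$, so $nR_i \le I(W_i;Y_i^n)+n\epsilon_n = h(Y_i^n)-h(Y_i^n\mid W_i)+n\epsilon_n \le h(Y_i^n)-h(Z_i^n)+n\epsilon_n$, and dividing by $n\log P$ and letting $n,P\to\infty$ yields $d_i \le 1$ for every $i\in\{1,2,3,4\}$. Summing over the four users gives $\text{DoF}_C=\sum_{i=1}^{4}d_i \le 4$. This bound depends only on the one-dimensional receive space at each destination and is therefore unaffected by whether or how the relays cache. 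Combining the two halves gives $\text{DoF}_C=4$.

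The point worth flagging, and what makes this corollary consistent with rather than contradicting the generic caching bound of $8/3$, is precisely the observation in the Remark following Proposition \ref{thm_1_LBaa}: for symmetric channels the relay-combining matrix $\mathbf{H}$ is singular, so step $(e)$ of \eqref{gghhjh}, and with it the genie-aided MAC-decoding converse of Theorem \ref{thm_2_LB}, does not apply here. Thus the only operative outer bound is the trivial per-destination one, and there is no genuine obstacle in the converse; the entire nontrivial content sits in the achievability, which is inherited wholesale from the symmetric non-caching result. The main thing to get right is the clean logical step that caching cannot reduce the DoF, after which both bounds are essentially immediate.
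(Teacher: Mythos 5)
Your proposal is correct and follows essentially the same three-step argument as the paper: achievability via $\text{DoF}_C \ge \text{DoF}_{NC} = 4$ from Proposition \ref{thm_1_LBaa}, and the converse via the cut-set bound $d_i \le 1$ at each single-antenna destination. Your added detail (making explicit why caching cannot hurt, and which cut gives $\le 4$) merely fleshes out steps the paper states without elaboration.
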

\begin{proof}
The proof follows from the following facts together:
\begin{itemize}
\item For the symmetric two-way $2\times2\times2$ relay network with no caching, ${\text{DoF}}_{NC}= 4$ (Proposition \ref{thm_1_LBaa}).
\item For the two-way $2\times2\times2$ relay network, ${\text{DoF}}_{C}\le 4$ due to the cut-set bound.
\item $DoF_{C} \ge DoF_{NC}$.
\end{itemize}
\end{proof}

\section{Conclusions}\label{sec5}
We have investigated the two-way $2\times2\times2$ relay network, a class of four-unicast networks. We have shown that the total DoF is bounded from above by $8/3$, indicating that bidirectional links do not double the DoF. We have also shown that DoF of $8/3$ is achievable with caching at the relays. Therefore, the proposed work demonstrates that caching can achieve the outer bound of the non-caching DoF of the two-way multiple-unicast network, thus showing that relay caching in this network could be helpful in terms of DoF. Finding the DoF for other models of two-way four-unicast networks as well as two-way networks with more than four source-destination pairs remains an open problem. Moreover, the effect of finite-size cache and message popularity in the information theoretic setting also remains to be investigated.


\begin{appendices}
\section{Achievability Strategy for Proposition \ref{thm_1_LBaa} }
\label{Appendix1}

We propose an achievability strategy for the symmetric two-way $2\times2\times2$ relay network by applying the time-extension method \cite{gou2012aligned}. Here, we will see that on $M$-symbol extension of the channel, a ${\text{DoF}}$ of $M$ is achievable for sources $1$ and $3$,{\footnote{In other words, the rate of ${{\mathcal R}_i}=M {\log P} + o({\log P})$, for $i\in\{1,3\}$ is achievable over $M$ time slots, for large $P$.}} while a ${\text{DoF}}$ of $(M-1)$ is achievable for sources $2$ and $4$, thus achieving a total ${\text{DoF}}$ of $(4M-2)$. Therefore, the achievable ${\text{DoF}} = \frac{4M-2}{M}\rightarrow 4$ as $M\rightarrow\infty$.

\noindent
\underline{1. Beamformers at transmitters for signal alignment at relays:} 

With $M$-symbol extension, we consider $M$ consecutive transmissions of the original channel as a single transmission of an equivalent MIMO channel with $M$ antennas and a diagonal channel matrix. In particular, we write the $M$ consecutively received signals at $R_k$ in \eqref{wyek1} in vector form as
\begin{eqnarray}\label{gfhj}
&&\underbrace{\left[ \begin{array}{c}
Y_{R_k}[Mn+1] \\
Y_{R_k}[Mn+2] \\
\vdots \\
Y_{R_k}[Mn+M] \end{array}
\right]}_{{\bf y}_{R_k}[n]} = \underbrace{\left[ \begin{array}{c}
Z_{R_k}[Mn+1] \\
Z_{R_k}[Mn+2] \\
\vdots \\
Z_{R_k}[Mn+M] \end{array}
\right]}_{{\bf z}_{R_k}[n]} +\nonumber\\
&&\sum_{i=1}^{4}
\underbrace{\left[ \begin{array}{cccc}
H_{i,R_k}[Mn+1] & 0 & \cdots & 0 \\
0 & H_{i,R_k}[Mn+2] & \cdots &0 \\
\vdots & \vdots & \ddots & \vdots \\
0 & 0 & \cdots & H_{i,R_k}[Mn+M] \end{array}
\right]}_{{\bf H}_{i,R_k}[n]}
\underbrace{\left[ \begin{array}{c}
X_{i}[Mn+1] \\
X_{i}[Mn+2] \\
\vdots \\
X_{i}[Mn+M] \end{array}
\right]}_{{\bf x}_{i}[n]},
\end{eqnarray}
for $k\in\{1,2\}$. Note that due to the channel symmetry assumption, we have ${\bf H}_{1,R_k}={\bf H}_{3,R_k}$, ${\bf H}_{2,R_k}={\bf H}_{4,R_k}$. For simplicity, we omit the time index $n$ in the following.

We assume that transmitter $S_i$, $i\in\{1,3\}$, has $M$ messages $W_{i,k}$, $k\in\{1,\dots,M\}$ to send. Each message $W_{i,k}$ is encoded into a Gaussian codeword of length $n$ denoted by $x_{{i,k}}[1],\dots,x_{{i,k}}[n]$. $S_i$ transmits the symbol $x_{{i,k}}$ via a beamforming vector ${\bf v}_{{1,k}}$, i.e., its transmitted signal is
\begin{equation}
{\bf x}_i=\sum_{k=1}^{M}{\bf v}_{1,k}x_{i,k},  \ \ \ i\in\{1,3\} .
\end{equation}
Further, we assume that transmitter $S_i$, $i\in\{2,4\}$, has $M-1$ messages $W_{i,k}$, $k\in\{1,\dots,M-1\}$ to send. Each message $W_{i,k}$ is encoded into a Gaussian codeword of length $n$ denoted by $x_{{i,k}}[1],\dots,x_{{i,k}}[n]$. $S_i$ transmits symbol $x_{{i,k}}$ via a beamforming vector ${\bf v}_{{2,k}}$, i.e., its transmitted signal is
\begin{equation}
{\bf x}_i=\sum_{k=1}^{M-1}{\bf v}_{2,k}x_{i,k}, \ \ \ i\in\{2,4\}.
\end{equation}

The beamforming vectors ${\bf v}_{{i,k}}$, $i\in\{1,2\}$ are such that the signals align at relays. In particular, at relay $R_1$, the signals ${\bf H}_{1,R_1}{\bf v}_{{1,i+1}}$, and ${\bf H}_{2,R_1}{\bf v}_{{2,i}}$ are aligned, i.e.,
\begin{equation}\label{allign1}
{\bf H}_{1,R_1}{\bf v}_{{1,i+1}}={\bf H}_{2,R_1}{\bf v}_{{2,i}}, \ \ \ i\in\{1,\cdots,M-1\}.
\end{equation}
At relay $R_2$, the signals ${\bf H}_{1,R_2}{\bf v}_{{1,i}}$, and ${\bf H}_{2,R_2}{\bf v}_{{2,i}}$ are aligned, i.e.,
\begin{equation}\label{allign2}
{\bf H}_{1,R_2}{\bf v}_{{1,i}}={\bf H}_{2,R_2}{\bf v}_{{2,i}}, \ \ \ i\in\{1,\cdots,M-1\}.
\end{equation}
Using \eqref{allign1} and \eqref{allign2}, we can express all beamformers in terms of ${\bf v}_{1,1}$ as follows
\begin{eqnarray}
{\bf v}_{{1,i+1}}&=&{\left({{\bf H}_{1,R_1}^{-1}{\bf H}_{2,R_1}{\bf H}_{2,R_2}^{-1}{\bf H}_{1,R_2}}\right)}^{i}{\bf v}_{{1,1}},\label{allign3}\\
{\bf v}_{{2,i}}&=&{\left({{\bf H}_{2,R_2}^{-1}{\bf H}_{1,R_2}{\bf H}_{1,R_1}^{-1}{\bf H}_{2,R_1}}\right)}^{i-1}{\bf H}_{2,R_2}^{-1}{\bf H}_{1,R_2}{\bf v}_{{1,1}}, \ \ \ i \in \{1,\dots,M-1\}.\label{allign4}
\end{eqnarray}
Note that all matrices in \eqref{allign3}-\eqref{allign4} are diagonal and once ${\bf v}_{1,1}$ is given, all other beamformers can be obtained. Then from \eqref{gfhj},  the received signal at $R_1$ can be written as
\begin{eqnarray}\label{kkq1}
{\bf y}_{R_1}&=&{\bf H}_{1,R_1}{\bf x}_1+{\bf H}_{2,R_1}{\bf x}_2+{\bf H}_{1,R_1}{\bf x}_3+{\bf H}_{2,R_1}{\bf x}_4+{\bf z}_{R_1}\nonumber\\
&=&
{\bf H}_{1,R_1}\sum_{k=1}^{M}{\bf v}_{1,k}x_{1,k}+
{\bf H}_{2,R_1}\sum_{k=1}^{M-1}{\bf v}_{2,k}x_{2,k}+
{\bf H}_{1,R_1}\sum_{k=1}^{M}{\bf v}_{1,k}x_{3,k}+
{\bf H}_{2,R_1}\sum_{k=1}^{M-1}{\bf v}_{2,k}x_{4,k}+
{\bf z}_{R_1}\nonumber\\
&=&{\bf H}_{1,R_1}{\bf v}_{1,1}(x_{1,1}+x_{3,1})+
\sum_{i=1}^{M-1}{\bf H}_{1,R_1}{\bf v}_{1,i+1} \left(x_{1,i+1}+x_{2,i}+x_{3,i+1}+x_{4,i}\right)+{\bf z}_{R_1},
\end{eqnarray}
where the last equality is due to \eqref{allign1}. Similarly the received signal at relay $R_2$ can be written as
\begin{eqnarray}\label{kkq2}
{\bf y}_{R_2}&=&{\bf H}_{1,R_2}{\bf x}_1+{\bf H}_{2,R_2}{\bf x}_2+{\bf H}_{1,R_2}{\bf x}_3+{\bf H}_{2,R_2}{\bf x}_4+{\bf z}_{R_2}\nonumber\\
&=&
{\bf H}_{1,R_2}\sum_{k=1}^{M}{\bf v}_{1,k}x_{1,k}+
{\bf H}_{2,R_2}\sum_{k=1}^{M-1}{\bf v}_{2,k}x_{2,k}+
{\bf H}_{1,R_2}\sum_{k=1}^{M}{\bf v}_{1,k}x_{3,k}+
{\bf H}_{2,R_2}\sum_{k=1}^{M-1}{\bf v}_{2,k}x_{4,k}+
{\bf z}_{R_2}\nonumber\\
&=&
\sum_{i=1}^{M-1}{\bf H}_{1,R_2}{\bf v}_{1,i} \left(x_{1,i}+x_{2,i}+x_{3,i}+x_{4,i}\right)+{\bf H}_{1,R_2}{\bf v}_{1,M}(x_{1,M}+x_{3,M})+{\bf z}_{R_2},
\end{eqnarray}
where the last equality follows from \eqref{allign2}.

If we define ${\bf H}_{R_1}\triangleq[{\bf H}_{1,R_1}{\bf v}_{1,1},{\bf H}_{1,R_1}{\bf v}_{1,2},\dots,{\bf H}_{1,R_1}{\bf v}_{1,M}]$, and ${\bf H}_{R_2}\triangleq[{\bf H}_{1,R_2}{\bf v}_{1,1},{\bf H}_{1,R_2}{\bf v}_{1,2},\dots,{\bf H}_{1,R_2}{\bf v}_{1,M}]$, then from \eqref{kkq1} and \eqref{kkq2} we have
\begin{eqnarray}
{\bf H}_{R_1}^{-1}{\bf y}_{R_1}&=&
\begin{bmatrix}
       x_{1,1}+x_{3,1} \\
       x_{1,2}+x_{3,2}+x_{2,1}+x_{4,1} \\
       \vdots \\
       x_{1,M}+x_{3,M}+x_{2,M-1}+x_{4,M-1}
      \end{bmatrix}
+{\bf H}_{R_1}^{-1}{\bf z}_{R_1}
      \triangleq\begin{bmatrix}
       x_{R_1,1} \\
       \vdots \\
       x_{R_1,M}
      \end{bmatrix},\label{ee1}\\
{\bf H}_{R_2}^{-1}{\bf y}_{R_2}&=&
\begin{bmatrix}
       x_{1,1}+x_{3,1}+x_{2,1}+x_{4,1} \\
       \vdots \\
       x_{1,M-1}+x_{3,M-1}+x_{2,M-1}+x_{4,M-1} \\
       x_{1,M}+x_{3,M}
      \end{bmatrix}
+{\bf H}_{R_2}^{-1}{\bf z}_{R_2}
\triangleq\begin{bmatrix}
       x_{R_2,1} \\
       \vdots \\
       x_{R_2,M}
      \end{bmatrix}.\label{ee2}
\end{eqnarray}

\noindent
\underline{2. Beamforming at relays for interference cancellation at destinations:} 

At time $n$, each relay amplify-and-forwards its received signal at time $(n-1)$. Specifically, $R_1$ sends $x_{R_1,k}$ in \eqref{ee1} using a beamforming vector ${\bf v}_{R_1,k}$, $k\in\{1,\dots,M\}$, i.e.,
\begin{equation}
{\bf x}_{R_1}=\sum_{k=1}^{M}{\bf v}_{R_1,k}x_{R_1,k}.
\end{equation}
$R_2$ sends $x_{R_2,k}$ in \eqref{ee2} using a beamforming vector ${\bf v}_{R_2,k}$, $k\in\{1,\dots,M-1\}$, i.e.,
\begin{equation}
{\bf x}_{R_2}=\sum_{k=1}^{M-1}{\bf v}_{R_2,k}x_{R_2,k}.
\end{equation}
The relay beamformers ${\bf v}_{R_1, k},  {\bf v}_{R_2, k}$ are designed such that the interference is canceled in the signal received at the destinations $D_i$, $i\in\{1,2,3,4\}$. Specifically, as before, with $M$-symbol extension the equivalent MIMO model for the received signal at $D_i$ can be written as
\begin{eqnarray}\label{akheri}
&&\underbrace{\left[ \begin{array}{c}
Y_{i}[Mn+1] \\
Y_{i}[Mn+2] \\
\vdots \\
Y_{i}[Mn+M] \end{array}
\right]}_{{\bf y}_{i}[n]} = \underbrace{\left[ \begin{array}{c}
Z_{i}[Mn+1] \\
Z_{i}[Mn+2] \\
\vdots \\
Z_{i}[Mn+M] \end{array}
\right]}_{{\bf z}_{i}[n]} +\nonumber\\
&&\sum_{k=1}^{2}
\underbrace{\left[ \begin{array}{cccc}
H_{R_k,i}[Mn+1] & 0 & \cdots & 0 \\
0 & H_{R_k,i}[Mn+2] & \cdots &0 \\
\vdots & \vdots & \ddots & \vdots \\
0 & 0 & \cdots & H_{R_k,i}[Mn+M] \end{array}
\right]}_{{\bf H}_{R_k,i}[n]}
\underbrace{\left[ \begin{array}{c}
X_{R_k}[Mn+1] \\
X_{R_k}[Mn+2] \\
\vdots \\
X_{R_k}[Mn+M] \end{array}
\right]}_{{\bf x}_{R_k}[n]},
\end{eqnarray}
for $i\in\{1,\dots,4\}$. Note that due to the channel symmetry assumption, we have ${\bf H}_{R_k,1}={\bf H}_{R_k,3}$, ${\bf H}_{R_k,2}={\bf H}_{R_k,4}$.

From \eqref{ee1}-\eqref{ee2}, ${x}_{R_1, i+1}  = x_{1,i+1}+x_{3,i+1}+x_{2,i}+x_{4,i}$ and ${x}_{R_2, i} = x_{1,i}+x_{3,i}+x_{2,i}+x_{4,i}$ (ignoring noise). In order to cancel $x_{2,i}$ and $x_{4,i}$ at $D_1$ and $D_3$, we choose the relay beamformers such that
\begin{equation}\label{avali}
{\bf H}_{R_1,1}{\bf v}_{R_1,i+1}=
-{\bf H}_{R_2,1}{\bf v}_{R_2,i}, \ \ \  i\in\{1,\dots,M-1\}.
\end{equation}
Moreover, we also have ${x}_{R_1, i} = x_{1,i}+x_{3,i}+x_{2,i-1}+x_{4,i-1}$ and ${x}_{R_2, i} = x_{1,i}+x_{3,i}+x_{2,i}+x_{4,i}$. In order to cancel $x_{1,i}$ and $x_{3,i}$ at $D_2$ and $D_4$, we choose the relay beamformers such that
\begin{equation}\label{dovomi}
-{\bf H}_{R_1,2}{\bf v}_{R_1,i}=
{\bf H}_{R_2,2}{\bf v}_{R_2,i}, \ \ \ \  i\in\{1,\dots,M-1\}.
\end{equation}
Using \eqref{avali} and \eqref{dovomi}, the relay beamformers can be expressed in terms of ${\bf v}_{R_1,1}$ as follows:
\begin{eqnarray}
{\bf v}_{R_1,i+1}&=&{({\bf H}_{R_1,1}^{-1}{\bf H}_{R_2,1}{\bf H}_{R_2,2}^{-1}{\bf H}_{R_1,2})}^{i}{\bf v}_{R_1,1},\label{q1}\\
{\bf v}_{R_2,i}&=&-{({\bf H}_{R_2,2}^{-1}{\bf H}_{R_1,2}{\bf H}_{R_1,1}^{-1}{\bf H}_{R_2,1})}^{i-1}
{\bf H}_{R_2,2}^{-1}{\bf H}_{R_1,2}{\bf v}_{R_1,1}.\label{q2}
\end{eqnarray}
Hence given ${\bf v}_{R_1,1}$, all relay beamformers can be obtained.

\noindent
\underline{3. Successive decoding at destinations:} 

Now we examine the received signals at the destinations. From \eqref{akheri},  the received signal at $D_i$, $i\in\{1,3\}$, can be written as
\begin{eqnarray}\label{qwer1}
{\bf y}_i&=& {\bf H}_{R_1,1}{\bf x}_{R_1}+{\bf H}_{R_2,1}{\bf x}_{R_2}+{\bf z}_{i}\nonumber\\
&=& {\bf H}_{R_1,1}\sum_{k=1}^{M}{\bf v}_{R_1,k}x_{R_1,k}+{\bf H}_{R_2,1}\sum_{k=1}^{M-1}{\bf v}_{R_2,k}x_{R_2,k}+{\bf z}_{i}\nonumber\\
&=& {\bf H}_{R_1,1}{\bf v}_{R_1,1}(x_{1,1}+x_{3,1}+z_{1,1}^{'})+\nonumber\\
&&\sum_{k=1}^{M-1}{\bf H}_{R_1,1}{\bf v}_{R_1,k+1}\left(x_{1,k+1}+x_{3,k+1}-x_{1,k}-x_{3,k}+z_{1,k+1}^{'}-z_{2,k}^{'}\right)
+{\bf z}_{i},
\end{eqnarray}
where the last equality follows from \eqref{avali} and $z_{i,k}^{'}$ is the $k^{\text{th}}$ entry of the noise vector ${\bf H}_{R_i}^{-1}{\bf z}_{R_i}$ in \eqref{ee1} and \eqref{ee2}. Similarly, the received signal at $D_i$, $i\in\{2,4\}$, can be written as
\begin{eqnarray}\label{qwer2}
{\bf y}_i&=& {\bf H}_{R_1,1}{\bf x}_{R_1}+{\bf H}_{R_2,1}{\bf x}_{R_2}+{\bf z}_{i}\nonumber\\
&=& {\bf H}_{R_1,1}\sum_{k=1}^{M}{\bf v}_{R_1,k}x_{R_1,k}+{\bf H}_{R_2,1}\sum_{k=1}^{M-1}{\bf v}_{R_2,k}x_{R_2,k}+{\bf z}_{i}\nonumber\\
&=& {\bf H}_{R_1,1}{\bf v}_{R_1,M}(x_{1,M}+x_{3,M}+x_{2,M-1}+x_{4,M-1}+z_{1,M}^{'})+\nonumber\\
&&\sum_{k=1}^{M-1}{\bf H}_{R_2,1}{\bf v}_{R_2,k}\left(x_{2,k}+x_{4,k}-x_{2,k-1}-x_{4,k-1}+z_{2,k}^{'}-z_{1,k}^{'}\right)
+{\bf z}_{i},
\end{eqnarray}
where the last equality follows from \eqref{dovomi} and $x_{2,0}=x_{4,0}=0$.

If we define ${\bf H}_{1}\triangleq[{\bf H}_{R_1,1}{\bf v}_{R_1,1},\dots,{\bf H}_{R_1,1}{\bf v}_{R_1,M}]$, and ${\bf H}_{2}\triangleq[{\bf H}_{R_2,1}{\bf v}_{R_2,1},\dots,{\bf H}_{R_2,1}{\bf v}_{R_2,M-1},{\bf H}_{R_1,1}{\bf v}_{R_1,M}]$, then from \eqref{qwer1} and \eqref{qwer2} we have
\begin{eqnarray}
{\bf H}_{1}^{-1}{\bf y}_{i}&=&
\begin{bmatrix}
       x_{1,1}+x_{3,1}+z_{1,1}^{'} \\
       x_{1,2}+x_{3,2}-x_{1,1}-x_{3,1}+z_{1,2}^{'}-z_{2,1}^{'} \\
       \vdots \\
       x_{1,M}+x_{3,M}-x_{1,M-1}-x_{3,M-1}+z_{1,M}^{'}-z_{2,M-1}^{'}
      \end{bmatrix}
+{\bf H}_{1}^{-1}{\bf z}_{i}, \ \ \ i\in\{1,3\},\label{eekk1}\\
{\bf H}_{2}^{-1}{\bf y}_{i}&=&
\begin{bmatrix}
       x_{2,1}+x_{4,1}-x_{2,0}-x_{4,0}+z_{2,1}^{'}-z_{1,1}^{'} \\
       \vdots \\
       x_{2,M-1}+x_{4,M-1}-x_{2,M-2}-x_{4,M-2}+z_{2,M-1}^{'}-z_{1,M-1}^{'} \\
       x_{1,M}+x_{3,M}+x_{2,M-1}+x_{4,M-1}+z_{1,M}^{'}
      \end{bmatrix}
+{\bf H}_{2}^{-1}{\bf z}_{i}, \ \ \ i\in\{2,4\}.\label{eekk2}
\end{eqnarray}

Now from \eqref{eekk1}, $D_i$, $i\in\{1,3\}$, is able to first estimate $x_{1,1}+x_{3,1}$ and then add it to the second dimension in order to estimate $x_{1,2}+x_{3,2}$ and so on. Finally, since $D_1=S_3$, $D_1$ knows $x_{3,1},\dots,x_{3,M}$, and therefore it can decode $x_{1,1},\dots,x_{1,M}$. Similarly, since $D_3=S_1$, $D_3$ knows $x_{1,1},\dots,x_{1,M}$, and so it can decode $x_{3,1},\dots,x_{3,M}$.

Similarly from \eqref{eekk2}, $D_i$, $i\in\{2,4\}$, can decode $x_{i,1},\dots,x_{i,M}$. And this completes the proof of achievability.

\end{appendices}

\bibliographystyle{IEEETran}
\bibliography{bib}

\end{document}